\newtheorem{definition}{Definition}[section]
\newtheorem{lemma}{Lemma}[section]
\newtheorem{theorem}{Theorem}[section]
\newenvironment{proof}{\vspace{8pt}
\noindent{\em Proof}: }{{\hfill {\large $\Box$}} \vspace{8pt}}
\begin{document}

\title{Upper Bounds on Matching Families in $\mathbb{Z}_{pq}^n$}

\author{Yeow~Meng~Chee,~\IEEEmembership{Senior~Member,~IEEE,}
        San~Ling,~Huaxiong~Wang,~Liang~Feng~Zhang% <-this % stops a space
\thanks{Manuscript received September 29, 2012; revised February 09, 2013;
accepted March 03, 2013.}
\thanks{
The authors are with the Division of Mathematical Sciences, School of Physical
and Mathematical Sciences, Nanyang Technological University, Singapore
637371 (e-mail: ymchee@ntu.edu.sg; lingsan@ntu.edu.sg; hxwang@ntu.edu.sg; liangf.zhang@gmail.com).
}
\thanks{Communicated by V. Guruswami, Associate Editor,
IEEE Transactions on Information Theory.}
\thanks{Copyright (c) 2012 IEEE. Personal use of this material is permitted.  However, permission to use this material for any other purposes must be obtained from the IEEE by sending a request to pubs-permissions@ieee.org.}
}

\markboth{
IEEE~TRANSACTIONS~ON~INFORMATION~THEORY,~VOL.~xx,~NO.~x,~~~ 2013}%
{Shell \MakeLowercase{\textit{et al.}}:
On the Largest Size of Matching Families in $\mathbb{Z}_{pq}^n$}

\maketitle

\begin{abstract}
\textit{Matching families} are one of the major ingredients in the construction of
{\em locally decodable codes}
(LDCs) and the best known constructions of LDCs with a
constant number of queries are based on matching families.
The determination of the largest size of  any  matching family in $\mathbb{Z}_m^n$, where $\mathbb{Z}_m$
is the ring of integers modulo $m$, is an interesting problem.
In this paper, we show an upper bound of $O((pq)^{0.625n+0.125})$ for
 the size of any matching family in
$\mathbb{Z}_{pq}^n$, where  $p$ and $q$ are two distinct primes. Our  bound is
valid when  $n$ is a constant,  $p\rightarrow \infty$
and $p/q\rightarrow 1$. Our result  improves
an upper bound of Dvir {\it et al.}
\end{abstract}

\begin{IEEEkeywords}
upper bound, matching families, locally decodable codes.
\end{IEEEkeywords}

\IEEEpeerreviewmaketitle

\section{Introduction}
\label{sec:introduction}

\IEEEPARstart{L}{\sc ocally Decodable Codes.} A classical error-correcting code $C$  allows one to encode any
message ${\bf x}=({\bf x}(1), \ldots, {\bf x}(k))$ of $k$ symbols  as a codeword $C({\bf x})$ of $N$ symbols  such that the message can be recovered even if
$C({\bf x})$ gets corrupted in a number of coordinates.
However, to recover even  a small fraction
of the message, one has to consider all or most of the coordinates of the
codeword. In such a scenario, more efficient  schemes are possible. They are known as
 {\em locally decodable codes} (LDCs).
Such codes allow the reconstruction of  any symbol of the message by
 looking at  a small number of coordinates of the codeword, even if a constant fraction of the codeword has been
  corrupted.

  Let $k,N$ be positive integers and let $\mathbb{F}$ be a finite field.
  For any ${\bf y}, {\bf z}\in \mathbb{F}^N$, we denote by
  $d_H( {\bf y}, {\bf z})$ the {\em Hamming distance}  between ${\bf y}$
 and ${\bf z}$.
 \begin{definition}
{\em (Locally Decodable Code)}
A code $C:\mathbb{F}^k\rightarrow \mathbb{F}^N$ is said to be
$(r,\delta,\epsilon)$-{\em locally decodable} if there is a randomized decoding algorithm
$D$  such that
\begin{enumerate}
\item for every ${\bf x}\in \mathbb{F}^k, i\in[k]$ and ${\bf y}\in \mathbb{F}^N$ such that $d_H(C({\bf x}),{\bf y})\leq \delta N$, $\Pr[D^{\bf y}(i)={\bf x}(i)]>1-\epsilon$,
where the probability is taken over the random coins of   $D$; and
\item $D$ makes at most $r$ queries to ${\bf y}$.
\end{enumerate}
\end{definition}
The efficiency of $C$ is measured by its  {\em query complexity} $r$ and {\em length} $N$ (as a function of $k$). Ideally, one would like
 both $r$ and $N$ to be as small as possible.

Implicit discussion of the notion of LDCs dates back to
\cite{BFLS91,Sud95,PS94}.
Katz and Trevisan \cite{KT00}
were the first to formally define LDCs and prove (superlinear) lower
bounds on their length.
Kerenidis and de Wolf \cite{KdW03} showed
a tight (exponential) lower bound for the length of 2-query LDCs.
Woodruff \cite{Woo07} obtained superlinear lower bounds for
 the length of   $r$-query LDCs, where $r\geq 3$.
More lower bounds for specific LDCs can be found in
\cite{GKST02,DJKRL02,Oba02,DS05,WW05,SL06}.
On the other hand, many constructions of  LDCs have been proposed in the past decade.
These constructions can be classified into three generations based on their technical ideas.
The  first-generation LDCs
\cite{BFLS91,KT00,BIK05,CGKS98}  are based on  (low-degree) multivariate polynomial
interpolation.
In such a code, each codeword
consists of evaluations of a low-degree polynomial in $\mathbb{F}[z_1,\ldots,z_n]$
at all points of $\mathbb{F}^n$, for some finite field $\mathbb{F}$.
The decoder recovers the value of the unknown polynomial at a point by shooting a line in
a random direction and decoding along it using noisy polynomial interpolation
\cite{BF90, Lip90, STV99}.
The second-generation LDCs \cite{BIKR02,WY05} are also based on low-degree multivariate polynomial interpolation but
with a clever use of recursion.
The third-generation LDCs, known as  {\em matching vector codes} (MV codes),
were initiated  by Yekhanin \cite{Yek07}
and developed further in \cite{Rag07,KY09,Efr09,Gop09,IS08,IS10,CFLWZ10,BET10,DGY10}.
The constructions involve novel combinatorial  and  algebraic ideas, where
 the key ingredient is the design of  large {\em matching families}  in
 $\mathbb{Z}_m^n$.
The interested reader may refer to Yekhanin  \cite{Yek12} for
 a good survey of  LDCs.

{\sc Matching Families.} Let $m$ and $n$ be positive integers.
For any vectors ${\bf u,v}\in \mathbb{Z}_m^n$, we denote by $\langle {\bf u,v} \rangle \triangleq
\sum_{i=1}^k {\bf u}(i){\bf v}(i) \bmod m$ their  {\em dot product}.
\begin{definition}\label{def:mf}
{\em (Matching Family)}
Let $S\subseteq \mathbb{Z}_m\setminus\{0\}$.
Two families  of vectors  $\mathcal{U}=\{{\bf u}_1,\ldots, {\bf u}_k\},~
\mathcal{V}=\{{\bf v}_1,\ldots, {\bf v}_k\}\subseteq \mathbb{Z}_m^n$ form an
 {\em $S$-matching family}  in $\mathbb{Z}_m^n$ if
\begin{enumerate}
\item $\langle {\bf u}_i, {\bf  v}_i\rangle=0$ for every $i\in [k]$; and
\item  $\langle {\bf u}_i, {\bf v}_j\rangle \in S$ for every $i,j\in [k]$ such that $i \neq j$.
\end{enumerate}
\end{definition}
The matching family defined above is of {\em size} $k$.
Dvir {\it et al.}~\cite{DGY10} showed that, if there is an $S$-matching family of
size $k$ in $\mathbb{Z}_m^n$, then there is an $(|S|+1)$-query LDC
encoding messages of length $k$ as codewords of length $m^n$.
Hence, large matching families are interesting because they result in short LDCs.
For any $S\subseteq \mathbb{Z}_m\setminus \{0\}$, it is  interesting  to
determine the largest size of any $S$-matching family in $\mathbb{Z}_m^n$.
When $S=\mathbb{Z}_m\setminus \{0\}$, this largest size is often denoted by
$k(m,n)$, which is clearly a {\em universal} upper bound for
the  size of any matching family in $\mathbb{Z}_m^n$.

{\sc Set Systems.}
The study of  matching families dates back to {\em set systems with restricted
 intersections}  \cite{BF98}, whose study was initiated in \cite{EKR61}.
\begin{definition}
{\em (Set System)}
Let $T$ and $S$ be two disjoint subsets of $\mathbb{Z}_m$. A collection
${\cal F}=\{F_1,\ldots,F_k\}$ of subsets of $[n]$ is said to be a
 $(T,S)$-{\em set system} over $[n]$ if
\begin{enumerate}
\item $|F_i|\bmod m\in T$ for every $i\in[k]$; and
\item $|F_i\cap F_j|\bmod m \in S$ for every $i,j\in [k]$ such that $i\neq j$.
\end{enumerate}
\end{definition}
The set system defined above is of {\em size} $k$. When $T=\{0\}$ and $S\subseteq \mathbb{Z}_m\setminus
\{0\}$, it is easy to show that the $(T,S)$-set system ${\cal F}$ yields
an  $S$-matching family of size $k$ in $\mathbb{Z}_m^n$. To see this,
let ${\bf u}_i={\bf v}_i\in \mathbb{Z}_m^n$ be  the characteristic vector of $F_i$ for every
$i\in [k]$, where  ${\bf u}_i(j)={\bf v}_i(j)=1$
for every $j\in F_i$ and 0 otherwise.
Clearly, ${\cal U}=\{{\bf u}_1,\ldots,{\bf u}_k\}$
and ${\cal V}=\{{\bf v}_1, \ldots, {\bf v}_k\}$ form an $S$-matching family of size $k$ in $\mathbb{Z}_m^n$.

When $m$ is a prime power  and $n\geq m$, Deza {\it et al.}~\cite{DFS83} and Babai {\it et al.}~\cite{BFKS01} showed that the largest size of any $(\{ 0 \},\mathbb{Z}_m\setminus\{0\})$-set systems over $[n]$ cannot be greater than
${n\choose m-1}+\cdots+{n\choose 0}$.
For any  integer $m$, Sgall  \cite{Sga99} showed that the  largest size of any
 $(\{ 0 \},\mathbb{Z}_m\setminus\{0\})$-set system over $[n]$
is bounded by $O(2^{0.5n})$.
On the other hand, Grolmusz \cite{Gro00}  constructed a $(\{ 0 \},\mathbb{Z}_m\setminus\{0\})$-set system  of  (superpolynomial) size $\exp(O((\log n)^r/(\log\log n)^{r-1}))$
over $[n]$ when  $m$  has $r\geq 2$ distinct prime divisors.
Grolmusz's set systems result in superpolynomial-sized matching families in $\mathbb{Z}_m^n$,  which have been  the key ingredient for
Efremenko's LDCs \cite{Efr09}.

{\sc Bounds.}
Due to the difficulty of determining $k(m,n)$ precisely, it is
 interesting to give both   lower and upper bounds  for $k(m,n)$.
When  $m\leq n$,
a {\em simple lower bound} for $k(m,n)$  is $k\triangleq {n\choose m-1}$.
To see this,  let $\mathcal{U}=\{{\bf u}_1,\ldots, {\bf u}_k\}$
be the set of all 0-1 vectors of
{\em Hamming weight} (i.e., the number of nonzero components) $m-1$ in $\mathbb{Z}_m^n$.
Let ${\bf v}_i=\textbf{1}-{\bf u}_i$ for every $i\in[k]$, where $\textbf{1}$
is the all-one vector. Then $\mathcal{U}$ and $\mathcal{V}=\{{\bf v}_1,\ldots, {\bf v}_k\}$
form a matching family  of size $k$.
When $m$ is a composite number with $r\geq 2$ distinct prime factors, the
 $(\{ 0 \},\mathbb{Z}_m\setminus\{0\})$-set systems of   \cite{Gro00,Kut02,DGY10}
result in superpolynomial-sized matching families in $\mathbb{Z}_m^n$.
In particular,
we have that $k(m,n)\geq \exp(O(\log^2n/\log\log n))$
when $m=pq$ for two distinct primes $p$ and $q$.
On the other hand,
Dvir {\it et al.}~\cite{DGY10} obtained upper bounds for $k(m,n)$ for various settings of the
integers $m$ and $n$. More precisely, they showed that
\begin{enumerate}
\item $k(m,n)\leq m^{n-1+o_m(1)}$ for any  integers $m$ and $n$, where $o_m(1)$ is a term that tends to 0 as $m$
approaches infinity;
\item $k(p,n)\leq \min\{1+{n+p-2\choose p-1}, 4p^{0.5n}+2\}$ for any prime $p$ and  integer $n$;
\item $k(m,n)\leq (m/q)^nk(q,n)$
for any  integers $m,n$ and $q$ such that $q|m$ and $\gcd(q,m/q)=1$.
\end{enumerate}
In particular, the latter two bounds imply that    $k(m,n)\leq p^n (4q^{0.5n}+2)$
when $m=pq$ for two distinct primes $p$ and $q$ such that $p\leq q$.

 {\sc Our Results.}
Dvir {\it et al.}~\cite{DGY10} conjectured that $k(m,n)\leq O(m^{0.5n})$ for any
integers $m$ and $n$.  A {\em special case} where the conjecture is open is when $n$ is
a constant,  and
 $m=pq$ for two distinct primes $p,q$ such that $p\rightarrow \infty$ and
 $p/q\rightarrow 1$.
 In this paper, we show that
 $k(m,n)\leq O(m^{0.625n+0.125})$ for this special case, which improves the
 best known upper bound that can be derived from results of Dvir {\it et al.} in \cite{DGY10}, i.e.,
 $k(m,n)\leq p^n (4q^{0.5n}+2)=O(m^{0.75n})$.

{\sc Our Techniques.}
Let $\mathcal{U}=\{{\bf u}_1,\ldots,{\bf u}_k\},\
 \mathcal{V}=\{{\bf v}_1,\ldots,{\bf v}_k\}\subseteq \mathbb{Z}_m^n$
be a matching family of size $k=k(m,n)$, where $m=pq$ for two distinct primes
 $p$ and $q$.
We say that ${\bf u},{\bf v}\in \mathbb{Z}_m^n$ are {\em equivalent}  (and write ${\bf u}\sim{\bf v}$) if there is a  $\lambda\in \mathbb{Z}_m^*$
such that ${\bf u}(i)=\lambda {\bf v}(i)$ for every $i\in[n]$,
where $\mathbb{Z}_m^*$ is the set of units of $\mathbb{Z}_m$.
Clearly, no two elements of
 $\mathcal{U}$ (resp. $\mathcal{V}$) can be  equivalent to each other.
Let  $s,t\in \{1,p,q,m\}$. We say that
 $({\bf u}_i,{\bf v}_i)$
is of {\em type $(s,t)$} if $\gcd({\bf u}_i(1),\ldots,{\bf u}_i(n),m)=s$ and
$\gcd({\bf v}_i(1),\ldots, {\bf v}_i(n),m)=t$.
We can partition the set $\{({\bf u}_i, {\bf v}_i): i\in [k]\}$ of pairs
according to their types.
Let $N_{s,t}$ be the number of pairs of type $(s,t)$.
Then we have the following observations:
\begin{enumerate}
\item $N_{s,t}\leq 1$ when $m|st$ (see Lemma \ref{lem:stdm});
\item  $N_{s,t}\leq k(q,n)$ when $(s,t)\in \{(1,p), (p,1), (p,p)\}$
(see Lemma \ref{lem:p_family}); and
\item $N_{s,t}\leq k(p,n)$ when $(s,t)\in \{(1,q), (q,1), (q,q)\}$ (see Lemma \ref{lem:q_family}).
\end{enumerate}
These observations in turn imply that
$
k\leq 9+N_{1,1}+3k(p,n)+3k(q,n)
$
and enable us to reduce the problem of upper-bounding $k$ to that of establishing an upper bound for $N_{1,1}$.

As in \cite{DGY10}, we establish an upper bound for $N_{1,1}$ by using
an interesting relation between matching families  and the {\em expanding
properties}  of the {\em projective graphs} (which will be explained shortly).
Let
\begin{equation}
\begin{split}
\mathbb{S}_{n,m}&=\{{\bf u}\in \mathbb{Z}_m^n: \gcd({\bf u}(1),\ldots ,{\bf u}(n),m)=1\}
{\rm ~and~}\\
\mathbb{P}_{n,m}&=\mathbb{H}_{n,m}= \mathbb{S}_{n,m}/\sim .
\end{split}
\end{equation}
We define the   {\em projective $(n-1)$-space} over $\mathbb{Z}_m$  to be the pair
$(\mathbb{P}_{n,m},
\mathbb{H}_{n,m})$.
We call the elements of
 $\mathbb{P}_{n,m}$   {\em points} and the elements of $\mathbb{H}_{n,m}$
{\em hyperplanes}.
We say that a point
${\bf u}$ {\em lies on} a hyperplane ${\bf v}$ if
$\langle {\bf u}, {\bf v}\rangle=0$.
The projective graph
 ${\bf G}_{n,m} $  is defined to be a bipartite graph with classes of vertices
 $\mathbb{P}_{n,m}\cup \mathbb{H}_{n,m}$,
where a point  ${\bf u}$ and a hyperplane ${\bf v}$
are {\em adjacent} if and only if ${\bf u}$ lies on ${\bf v}$. Vertices that are adjacent
to each other are called {\em neighbors}.
A  set ${\cal U}^\prime\subseteq \mathbb{P}_{n,m}$ has the {\em unique neighbor property} if,
 for every ${\bf u}\in {\cal U}^\prime$, there is a  hyperplane ${\bf v}$
such that ${\bf v}$ is adjacent to ${\bf u}$ but to no  other points  in ${\cal U}^\prime$
(see also Definition \ref{def:unp}).
Without loss of generality, let
$\{({\bf u}_i, {\bf v}_i): i\in [k^\prime]\}$ be the set of pairs of type $(1,1)$,
where $k^\prime=N_{1,1}$. Let $\mathcal{U}^\prime=\{{\bf u}_1,\ldots, {\bf u}_{k^\prime}\}
\subseteq \mathbb{P}_{n,m}$.
It is straightforward to see that ${\cal U}^\prime$ satisfies the unique neighbor
property (Lemma \ref{lem:unpm}).
For any $X\subseteq \mathcal{U}^\prime$, we denote by $N(X)$ the {\em neighborhood} of
$X$, i.e., the collection of vertices in $\mathbb{H}_{n,m}$ that are adjacent to some
vertex in $X$.
Since every point in ${\cal U}^\prime\setminus X$ must have a unique neighbor in
$\mathbb{H}_{n,m}\setminus N(X)$, we have that
\begin{equation}
|\mathcal{U}^\prime|\leq |X|+|\mathbb{H}_{n,m}|-|N(X)|.
\end{equation}

We show that ${\bf G}_{n,m}$ has some kind of {\em expanding property} (see Theorem \ref{thm:bound_NX}),
meaning that
$|N(X)|$ is  large for certain choices of  $X$, which
allows us to obtain the expected upper bound for $k^\prime=N_{1,1}$
(see Theorems \ref{thm:general_bound} and \ref{thm:upper_bound_m_infty}).
When $m$ is a prime, such an expanding  property of ${\bf G}_{n,m}$  was  proved
 by Alon  \cite{Alo86} using  the {\em spectral method} and it says that
\begin{equation}
\label{eqn:expanding}
|N(X)|\geq |\mathbb{P}_{n,m}|-|\mathbb{P}_{n,m}|^{n/(n-1)}/|X|,
\end{equation}
where $X\subseteq \mathbb{P}_{n,m}$ is arbitrary.

Let $A_{n,m}=(a_{{\bf u} {\bf v}})$ be the {\em adjacency matrix} of ${\bf G}_{n,m}$,
where the rows are labeled by the points, the columns by the hyperplanes, and
$a_{\bf uv}=1$ if and only if ${\bf u}$ and ${\bf v}$
are adjacent. Note that the matrix $A_{n,m}$ may take many different forms
because the sets $\mathbb{P}_{n,m}$ and $\mathbb{H}_{n,m}$ are not {\em ordered}.
However, from now on, we always assume that $\mathbb{P}_{n,m}$ and $\mathbb{H}_{n,m}$
 are identical to
each other as ordered sets. Hence, $A_{n,m}$ is {\em symmetric}.
 Let  $\chi_{\mbox{\tiny\itshape X}}$ be the characteristic vector of
$X$, where the components of $\chi_{\mbox{\tiny\itshape X}}$ are labeled by the elements  ${\bf u}\in \mathbb{P}_{n,m}$ and $\chi_{\mbox{\tiny\itshape X}}({\bf u})=1$ if
and only if ${\bf u}\in X$.
Alon \cite{Alo86} obtained both an upper bound and a lower bound  for $\chi_{\mbox{\tiny\itshape X}}^t B_{n,m} \chi_{\mbox{\tiny\itshape X}}$ that
jointly result in (\ref{eqn:expanding}),
 where $B_{n,m}=A_{n,m}A^t_{n,m}$ with the superscript $t$ denoting the {\em transpose} of a matrix.
More precisely,  Alon \cite{Alo86} determined the eigenvalues of $B_{n,m}$ and
 represented $\chi_{\mbox{\tiny\itshape X}}$ as a linear combination of the eigenvectors of $B_{n,m}$.
In this paper, we develop their spectral method further
and show a {\em tensor lemma} on
$B_{n,m}$ (see Lemma \ref{pro:tensorp}), which says that
${\bf G}_{n,m}$  is a tensor product of ${\bf G}_{n,p}$ and ${\bf G}_{n,q}$
when $m=pq$, where $p$ and $q$ are two distinct primes. As in \cite{Alo86}, we
determine the eigenvalues of $B_{n,m}$ and represent
$\chi_{\mbox{\tiny\itshape X}}$ as a linear combination of
the eigenvectors of $B_{n,m}$.
We obtain both an upper bound and a lower bound for $\chi_{\mbox{\tiny\itshape X}}^t B_{n,m} \chi_{\mbox{\tiny\itshape X}}$, which are then used to show that ${\bf G}_{n,m}$ has some kind of expanding
property (see Theorem \ref{thm:bound_NX}).

 {\sc Subsequent Work.}
Recently,  in a follow-up work, Bhowmick {\it et al.} \cite{BDL12} obtained new  upper bounds for $k(m,n)$.
They used different techniques and showed that $k(m,n)\leq m^{0.5n+14\log m}$ for any integers $m$ and $n$.
In particular, their upper bound translates into $k(m,n)\leq m^{0.5n+O(1)}$ for the
special case we consider in this paper.

{\sc Organization.}
In Section \ref{sec:pg}, we study projective graphs over $\mathbb{Z}_m$ and matrices
associated with such graphs. In Section \ref{mainresult-sec3}, we  establish our  upper bound for  $k(pq,n)$ using
 the unique neighbor property in projective graphs.
Section \ref{sec:con} contains some concluding remarks.

\section{Projective Graphs and  Associated Matrices}
\label{sec:pg}

Let $d$ be a positive integer. We denote by $\textbf{0}_d,~\textbf{1}_d,~I_d$ and $J_d$
 the all-zero (either row or column) vector of dimension $d$, all-one (either row or column) vector of dimension $d$, identity matrix of
 order $d$ and all-one matrix of order $d$, respectively.  We denote by $O$ an all-zero matrix
 whose size is clear from the context. We also define
\begin{equation}
\begin{split}
K_d&=I_d+J_d,\\
L_{d}&=
\begin{pmatrix}
(d+1)  I_{d}-J_{d} &  -\textbf{1}_d
\end{pmatrix},{\rm ~and}\\
R_d&=
\begin{pmatrix}
I_{d} &
-\textbf{1}_d
\end{pmatrix}^t .
\end{split}
\end{equation}
Let $A = (a_{ij})$
and $B$ be two matrices. We define their {\em tensor product} to be the block matrix
$A\otimes B = (a_{ij}\cdot B)$.
We say that  $A\simeq B$
if  $A$ can be obtained from $B$ by {\em simultaneously}  permuting the rows and columns
(i.e., apply the same permutation to both the rows and columns).
Clearly,  $A$ and $B$ have the same
eigenvalues if $A\simeq  B$.

In this section, we study the projective
graph ${\bf G}_{n,m}$ defined in Section \ref{sec:introduction}. We also follow the
notation there.
Let $\theta_{n,m}=|\mathbb{P}_{n,m}|$ be the number of points (or hyperplanes) in the projective $(n-1)$-space
over $\mathbb{Z}_m$.
Chee and Ling \cite{CL93} showed that
\begin{equation}
\label{equation:order}
\theta_{n,m}=m^{n-1} \prod_{p|m}
(1+1/p+
\cdots+1/p^{n-1})
\end{equation}
and $
|N({\bf u})|=|N({\bf v})|=\theta_{n-1,m}
$  for every point ${\bf u}$ and hyperplane  ${\bf v}$.
When $m$ is  prime,  Alon \cite{Alo86} showed that
$\theta_{n-1,m}^2$ is an eigenvalue of $B_{n,m}$ of multiplicity 1 and
$m^{n-2}$  is an eigenvalue of $B_{n,m}$ of multiplicity $\theta_{n,m}-1$.
Furthermore, an eigenvector of $B_{n,m}$  with eigenvalue $\theta_{n-1,m}^2$ is
 $\textbf{1}$
 and linearly independent eigenvectors of $B_{n,m}$  with
eigenvalue $m^{n-2}$ can be chosen to be the  columns of  $R_d$, where $d=\theta_{n,m}-1$.
However, the  eigenvalues of $B_{n,m}$
have not been studied   when $m$ is  composite.
Here, we determine the eigenvalues  of
$B_{n,m}$ when $m=pq$ for two distinct primes $p$ and $q$.
\begin{lemma}\label{pro:tensorp}
{\em  (Tensor Lemma)}
Let $n>1$ be an integer and let $m=p q $ for two distinct primes $p $ and $q$.
Then
$B_{n,m} \simeq B_{n,p }\otimes
B_{n,q }$.
\end{lemma}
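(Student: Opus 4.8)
The plan is to exploit the Chinese Remainder Theorem, which gives a ring isomorphism $\mathbb{Z}_m \cong \mathbb{Z}_p \times \mathbb{Z}_q$ and hence a bijection $\mathbb{Z}_m^n \cong \mathbb{Z}_p^n \times \mathbb{Z}_q^n$ that is compatible with dot products componentwise. First I would check that this bijection descends to the projective spaces: a vector ${\bf u}\in\mathbb{Z}_m^n$ has $\gcd({\bf u}(1),\dots,{\bf u}(n),m)=1$ if and only if its image $({\bf u}_p,{\bf u}_q)$ has $\gcd(\cdot,p)=1$ and $\gcd(\cdot,q)=1$, so $\mathbb{S}_{n,m}$ corresponds to $\mathbb{S}_{n,p}\times\mathbb{S}_{n,q}$. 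Moreover $\mathbb{Z}_m^* \cong \mathbb{Z}_p^*\times\mathbb{Z}_q^*$, so the equivalence relation $\sim$ on $\mathbb{S}_{n,m}$ factors as the product of the two relations $\sim$ on $\mathbb{S}_{n,p}$ and $\mathbb{S}_{n,q}$. This yields a bijection $\mathbb{P}_{n,m}\cong \mathbb{P}_{n,p}\times\mathbb{P}_{n,q}$ (and the same for $\mathbb{H}$), which in particular recovers the count $\theta_{n,m}=\theta_{n,p}\,\theta_{n,q}$ implicit in (\ref{equation:order}).

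Next I would translate the incidence relation. Since CRT respects dot products, for a point ${\bf u}\leftrightarrow({\bf u}_p,{\bf u}_q)$ and a hyperplane ${\bf v}\leftrightarrow({\bf v}_p,{\bf v}_q)$ we have $\langle {\bf u},{\bf v}\rangle\equiv 0\pmod m$ iff $\langle {\bf u}_p,{\bf v}_p\rangle\equiv 0\pmod p$ and $\langle {\bf u}_q,{\bf v}_q\rangle\equiv 0\pmod q$. Hence, under the identification of index sets $\mathbb{P}_{n,m}\times\mathbb{H}_{n,m}$ with $(\mathbb{P}_{n,p}\times\mathbb{H}_{n,p})\times(\mathbb{P}_{n,q}\times\mathbb{H}_{n,q})$, the adjacency entry satisfies $a^{(m)}_{{\bf u}{\bf v}} = a^{(p)}_{{\bf u}_p{\bf v}_p}\cdot a^{(q)}_{{\bf u}_q{\bf v}_q}$, which is exactly the statement $A_{n,m}\simeq A_{n,p}\otimes A_{n,q}$, where $\simeq$ absorbs the reindexing bijection (a genuine permutation of rows and, identically, of columns once we order $\mathbb{P}$ and $\mathbb{H}$ compatibly through the same CRT bijection). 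Taking $B_{n,m}=A_{n,m}A_{n,m}^t$ and using the mixed-product property $(A_{n,p}\otimes A_{n,q})(A_{n,p}\otimes A_{n,q})^t = (A_{n,p}A_{n,p}^t)\otimes(A_{n,q}A_{n,q}^t)$ then gives $B_{n,m}\simeq B_{n,p}\otimes B_{n,q}$.

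The one place demanding care — and the step I expect to be the main obstacle — is bookkeeping the orderings so that $\simeq$ (which the paper defines as \emph{simultaneous} row/column permutation) really applies: the convention adopted in the excerpt is that $\mathbb{P}_{n,m}$ and $\mathbb{H}_{n,m}$ are taken to be identical as ordered sets, so I must choose the CRT identification so that the \emph{same} ordering of $\mathbb{P}_{n,m}$ and $\mathbb{H}_{n,m}$ is induced from the common ordering used on the $p$- and $q$-factors, and likewise that the tensor product $A_{n,p}\otimes A_{n,q}$ is indexed by the lexicographic product of those orderings. Once the reindexing permutation is pinned down as a single permutation applied to both rows and columns, the identity $a^{(m)}_{{\bf u}{\bf v}}=a^{(p)}_{{\bf u}_p{\bf v}_p}a^{(q)}_{{\bf u}_q{\bf v}_q}$ established above is precisely $A_{n,m}\simeq A_{n,p}\otimes A_{n,q}$, and the conclusion for $B_{n,m}$ follows. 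I would close by remarking that this immediately gives the eigenvalues of $B_{n,m}$ as products of those of $B_{n,p}$ and $B_{n,q}$, which Alon \cite{Alo86} computed, setting up the spectral argument of the next section.
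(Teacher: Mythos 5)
Your proposal is correct and follows essentially the same route as the paper: a CRT-induced bijection $\mathbb{P}_{n,m}\cong\mathbb{P}_{n,p}\times\mathbb{P}_{n,q}$ (the paper builds the map $\pi$ on representatives and checks well-definedness via $\mathbb{Z}_m^*\cong\mathbb{Z}_p^*\times\mathbb{Z}_q^*$, then uses the count $\theta_{n,m}=\theta_{n,p}\theta_{n,q}$ for surjectivity), the factoring of the incidence relation modulo $p$ and $q$ to get $A_{n,m}\simeq A_{n,p}\otimes A_{n,q}$, and the mixed-product property for $B_{n,m}=A_{n,m}A_{n,m}^t$. The ordering bookkeeping you flag is exactly what the paper handles by fixing $\mathbb{P}_{n,m}$ as the lexicographic product of the ordered sets $\mathbb{P}_{n,p}$ and $\mathbb{P}_{n,q}$.
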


\begin{proof}
Let $\pi: \mathbb{P}_{n,p }\times \mathbb{P}_{n,q }
\rightarrow \mathbb{P}_{n,m}$
be the mapping defined by $\pi({\bf u},{\bf v})={\bf w}$,
where
\begin{equation}
\label{eqn:defw}
\begin{split}
{\bf w}(i)\equiv {\bf u}(i)\bmod p \quad {\rm and} \quad
{\bf w}(i)\equiv {\bf v}(i)\bmod q
\end{split}
\end{equation}
 for every $i\in[n]$. Then $\pi$ is well-defined.
To see this, let ${\bf w}^\prime =\pi({\bf u}^\prime, {\bf v}^\prime)$ and ${\bf w}=\pi({\bf u}, {\bf v})$
for  ${\bf u}, {\bf u}^\prime\in \mathbb{S}_{n,p }$ and $
{\bf v}, {\bf v}^\prime\in \mathbb{S}_{n,q }$.  If
${\bf u}\sim {\bf u}^\prime$ and ${\bf v}\sim {\bf v}^\prime$, then there are integers
$\lambda\in \mathbb{Z}_{p }^*$ and $\mu\in \mathbb{Z}_{q }^*$ such that
\begin{equation}
\label{eqn:uupvvp}
\begin{split}
{\bf u}^\prime(i)\equiv \lambda {\bf u}(i)\bmod p \quad {\rm and} \quad
{\bf v}^\prime(i)\equiv \mu {\bf v}(i)\bmod q
\end{split}
\end{equation}
for every
$i\in [n]$. Let $\delta\in \mathbb{Z}_m^*$ be an integer such that
\begin{equation}
\label{eqn:defdelta}
\begin{split}
\delta\equiv \lambda \bmod p \quad {\rm and} \quad
\delta \equiv \mu \bmod q.
\end{split}
\end{equation}
By (\ref{eqn:defw}), (\ref{eqn:uupvvp}) and (\ref{eqn:defdelta}), we have that  ${\bf w}^\prime (i)\equiv \delta  {\bf w}(i) \bmod m$
for every $i\in [n]$. Hence, ${\bf w}\sim {\bf w}^\prime$.

Let $\mathbb{P}_{n,p }=\{{\bf u}_1,\ldots, {\bf u}_{\ell_1}\}$ and
$\mathbb{P}_{n,q }=\{{\bf v}_{1},\ldots, {\bf v}_{\ell_2}\}$, where $\ell_1=\theta_{n,p}$ and $\ell_2=\theta_{n,q}$.
It is clear that $\pi$ is injective and $\theta_{n,m}=\ell_1\ell_2$
(this is clear from (\ref{equation:order})). It follows that
  $\pi$ is bijective and
\begin{equation}\label{equ:ordering}
\mathbb{P}_{n,m}=
 \{\pi({\bf u}_{1},{\bf  v}_{1}),\ldots, \pi({\bf u}_{1},{\bf  v}_{\ell_2}),\ldots, \pi({\bf u}_{\ell_1},
 {\bf v}_{\ell_2})\}.
\end{equation}
Let ${\bf w}$ and ${\bf w}^\prime$ be as above.  Then
$\langle {\bf w}, {\bf w}^\prime\rangle\equiv 0\bmod m$ if and only if
$\langle {\bf u}, {\bf u}^\prime\rangle \equiv 0\bmod p$ and $
\langle {\bf v}, {\bf v}^\prime\rangle \equiv 0\bmod q$.
Hence, the
$({\bf w}, {\bf w}^\prime)$-entry of $A_{n,m}$ is equal to 1
if and only if the $({\bf u}, {\bf u}^\prime)$-entry of $A_{n,p }$
and the $({\bf v},{\bf v}^\prime)$-entry of $A_{n,q }$ are both equal to 1.
Hence,  $A_{n,m}\simeq  A_{n,p }
\otimes A_{n,q }$. It follows that
\begin{equation*}
\begin{split}
B_{n,m}&=
A_{n,m}A_{n,m}^t \\
&\simeq  (A_{n,p}\otimes A_{n,q})(A_{n,p}\otimes A_{n,q})^t\\
&=
(A_{n,p}A_{n,p}^t)\otimes(A_{n,q} A_{n,q}^t)\\
&=B_{n,p}\otimes B_{n,q},
\end{split}
\end{equation*}
as desired.
\end{proof}

In fact, we could have concluded that $A_{n,m}=A_{n,p}\otimes A_{n,q}$ and therefore
$B_{n,m}=B_{n,p}\otimes B_{n,q}$ in Lemma \ref{pro:tensorp}. The sole reason that we
did not do so is that those matrices may take different forms, as noted in Section
\ref{sec:introduction}.
To facilitate further analysis, we make the
matrices unique
 such that
$A_{n,m}=A_{n,p}\otimes A_{n,q}$.
This can be achieved by making the sets $\mathbb{P}_{n,p}, \mathbb{P}_{n,q}$
and $\mathbb{P}_{n,m}$ unique.
To do so, we first make  $\mathbb{P}_{n,p}=
[{\bf u}_1,\ldots, {\bf u}_{\ell_1}]$ and
$\mathbb{P}_{n,q}=
[{\bf v}_1,\ldots, {\bf v}_{\ell_2}]$  unique as ordered sets, where $\ell_1=\theta_{n,p}$ and
$\ell_2=\theta_{n,q}$.
\begin{figure*}
\begin{center}
Figure 1: Ordered Point Sets

\vspace{0.25cm}

\begin{tabular}{|c|c|ccccccc|}
\hline
$\mathbb{P}_{3,2}$ & $\mathbb{P}_{3,3}$ & \multicolumn{7}{c|}{$\mathbb{P}_{3,6}$}  \\
\hline
(0, 0, 1) & (0, 0, 1) & (0, 0, 1) & (0, 3, 4) & (0, 3, 1) & (3, 0, 4) & (3, 0, 1) & (3, 3, 4) & (3, 3, 1)\\
(0, 1, 0) & (0, 1, 0) & (0, 4, 3) & (0, 1, 0) & (0, 1, 3) & (3, 4, 0) & (3, 4, 3) & (3, 1, 0) & (3, 1, 3)\\
(0, 1, 1) & (0, 1, 1) & (0, 4, 1) & (0, 1, 4) & (0, 1, 1) & (3, 4, 4) & (3, 4, 1) & (3, 1, 4) & (3, 1, 1)\\
(1, 0, 0) & (0, 1, 2) & (0, 4, 5) & (0, 1, 2) & (0, 1, 5) & (3, 4, 2) & (3, 4, 5) & (3, 1, 2) & (3, 1, 5)\\
(1, 0, 1) & (1, 0, 0) & (4, 0, 3) & (4, 3, 0) & (4, 3, 3) & (1, 0, 0) & (1, 0, 3) & (1, 3, 0) & (1, 3, 3)\\
(1, 1, 0) & (1, 0, 1) & (4, 0, 1) & (4, 3, 4) & (4, 3, 1) & (1, 0, 4) & (1, 0, 1) & (1, 3, 4) & (1, 3, 1)\\
(1, 1, 1) & (1, 0, 2) & (4, 0, 5) & (4, 3, 2) & (4, 3, 5) & (1, 0, 2) & (1, 0, 5) & (1, 3, 2) & (1, 3, 5)\\
          & (1, 1, 0) & (4, 4, 3) & (4, 1, 0) & (4, 1, 3) & (1, 4, 0) & (1, 4, 3) & (1, 1, 0) & (1, 1, 3)\\
          & (1, 1, 1) & (4, 4, 1) & (4, 1, 4) & (4, 1, 1) & (1, 4, 4) & (1, 4, 1) & (1, 1, 4) & (1, 1, 1)\\
          & (1, 1, 2) & (4, 4, 5) & (4, 1, 2) & (4, 1, 5) & (1, 4, 2) & (1, 4, 5) & (1, 1, 2) & (1, 1, 5)\\
          & (1, 2, 0) & (4, 2, 3) & (4, 5, 0) & (4, 5, 3) & (1, 2, 0) & (1, 2, 3) & (1, 5, 0) & (1, 5, 3)\\
          & (1, 2, 1) & (4, 2, 1) & (4, 5, 4) & (4, 5, 1) & (1, 2, 4) & (1, 2, 1) & (1, 5, 4) & (1, 5, 1)\\
          & (1, 2, 2) & (4, 2, 5) & (4, 5, 2) & (4, 5, 5) & (1, 2, 2) & (1, 2, 5) & (1, 5, 2) & (1, 5, 5)\\
\hline
\end{tabular}
\end{center}
\label{fig:order}
\end{figure*}

For example, as shown in Figure 1, we may set
$\mathbb{P}_{3,2}=[
(0, 0, 1),~(0, 1, 0),~ (0, 1, 1),~ (1, 0, 0),~ (1, 0, 1),~  (1, 1, 1)]$
and
$\mathbb{P}_{3,3}~=[
(0, 0, 1),(0, 1, 0), (0, 1, 1), (0, 1, 2), (1, 0, 0),  (1, 0, 1),\\ (1, 0, 2),
(1, 1,0), (1, 1,1), (1, 1,2), (1, 2,0), (1, 2, 1), (1, 2,2) ]$. Then
both $\mathbb{P}_{3,2}$ and $\mathbb{P}_{3,3}$ have been made unique as ordered sets.
(Here, each equivalence class in $\mathbb{P}_{n,p}$ is represented by the first element
when its elements are arranged in lexicographical order, and these representatives
are subsequently also arranged in lexicographical order.)
Once  $\mathbb{P}_{n,p}$ and $\mathbb{P}_{n,q}$
have been made unique as ordered sets, we can simply set  $\mathbb{P}_{n,m}
=[{\bf w}_1,{\bf w}_2, \ldots, {\bf w}_{\ell}]=
[\pi({\bf u}_1,{\bf v}_1), \pi({\bf u}_1,{\bf v}_2),\ldots,
\pi({\bf u}_{\ell_1},{\bf v}_{\ell_2})]$, where $\ell=\ell_1\ell_2$ and
${\bf w}_1=\pi({\bf u}_1,{\bf v}_1), {\bf w}_2=\pi({\bf u}_1,{\bf v}_2),\ldots,
{\bf w}_\ell=\pi({\bf u}_{\ell_1},{\bf v}_{\ell_2})$.
For example, as shown in Figure 1, $\mathbb{P}_{3,6}$ consists of
$\ell_1(=7)$ columns and the $i$th column corresponds to $\pi({\bf u}_i,{\bf v}_1),\ldots,
\pi({\bf u}_i,{\bf v}_{\ell_2})$ for every $i\in [\ell_1]$.
From now on, we suppose that the point sets $\mathbb{P}_{n,p}, \mathbb{P}_{n,q}$
and $\mathbb{P}_{n,m}$ have always been made unique, such as in the way illustrated above. Then we have
\begin{equation}
\label{eqn:tensorp}
A_{n,m}=A_{n,p}\otimes A_{n,q} \quad {\rm and} \quad
B_{n,m}=B_{n,p}\otimes B_{n,q}.
\end{equation}
 Let
$d_1=1, ~d_2=\ell_1-1,~d_3=\ell_2-1$ and $d_4=(\ell_1-1)(\ell_2-1)$. We define
an $\ell\times \ell$ matrix
\begin{equation}
\label{eqn:defY}
\begin{split}
Y&=
\begin{pmatrix}
Y_1 & Y_2 &Y_3 &Y_4
\end{pmatrix}\\
&=\begin{pmatrix}
\textbf{1}_\ell &  R_{d_2}
  \otimes \textbf{1}_{\ell_2}&  \textbf{1}_{\ell_1}  \otimes
R_{d_3} &  R_{d_2}  \otimes
R_{d_3}
\end{pmatrix}.
\end{split}
\end{equation}

\begin{lemma}\label{lemma:eigenvalue}
For every $s\in \{1,2,3,4\}$, the $d_s$ columns of $Y_s$ are linearly independent
eigenvectors of $B_{n,m}$ with eigenvalue $\lambda_s$, where
$\lambda_1 =\theta_{n-1,m}^2,~\lambda_2 =p^{n-2}\theta_{n-1,q}^2,~
\lambda_3 =  q^{n-2}\theta_{n-1,p}^2$ and $\lambda_4 =  m^{n-2}$.
\end{lemma}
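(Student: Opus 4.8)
The plan is to deduce the lemma from Alon's spectral description of $B_{n,p}$ and $B_{n,q}$ together with the tensor identity $B_{n,m}=B_{n,p}\otimes B_{n,q}$ from (\ref{eqn:tensorp}), using only the elementary facts that $(M\otimes N)(\mathbf{x}\otimes\mathbf{y})=M\mathbf{x}\otimes N\mathbf{y}$ and that the columns of a tensor product $M'\otimes N'$ are exactly the vectors $\mathbf{c}\otimes\mathbf{d}$ with $\mathbf{c}$ a column of $M'$ and $\mathbf{d}$ a column of $N'$. Recall first, from the paragraph preceding Lemma \ref{pro:tensorp}, that for a prime $\ell$ the matrix $B_{n,\ell}$ has only two eigenvalues: $\theta_{n-1,\ell}^2$, with eigenvector $\mathbf{1}_{\theta_{n,\ell}}$, and $\ell^{n-2}$, with eigenvectors the columns of $R_{\theta_{n,\ell}-1}$. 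Specializing to $\ell=p$ gives, for $B_{n,p}$, the eigenvector $\mathbf{1}_{\ell_1}$ with eigenvalue $\theta_{n-1,p}^2$ and the columns of $R_{d_2}$ with eigenvalue $p^{n-2}$; specializing to $\ell=q$ gives, for $B_{n,q}$, the eigenvector $\mathbf{1}_{\ell_2}$ with eigenvalue $\theta_{n-1,q}^2$ and the columns of $R_{d_3}$ with eigenvalue $q^{n-2}$.

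Combining this with (\ref{eqn:tensorp}) and the tensor facts above, every column of $Y_1=\mathbf{1}_{\ell_1}\otimes\mathbf{1}_{\ell_2}$ (note this equals $\mathbf{1}_\ell$), $Y_2=R_{d_2}\otimes\mathbf{1}_{\ell_2}$, $Y_3=\mathbf{1}_{\ell_1}\otimes R_{d_3}$, and $Y_4=R_{d_2}\otimes R_{d_3}$ is an eigenvector of $B_{n,m}$ with eigenvalue $\theta_{n-1,p}^2\theta_{n-1,q}^2$, $p^{n-2}\theta_{n-1,q}^2$, $\theta_{n-1,p}^2 q^{n-2}$, and $p^{n-2}q^{n-2}$ respectively. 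It then remains only to identify these with the stated $\lambda_s$: from the product formula (\ref{equation:order}) with $m=pq$ we get $\theta_{n-1,m}=\theta_{n-1,p}\theta_{n-1,q}$, so $\theta_{n-1,p}^2\theta_{n-1,q}^2=\theta_{n-1,m}^2=\lambda_1$, and $p^{n-2}q^{n-2}=m^{n-2}=\lambda_4$; the middle two eigenvalues already have the claimed form $\lambda_2,\lambda_3$. Here one checks that the dimensions are consistent ($R_{d_2}$ is $\ell_1\times d_2$ and $R_{d_3}$ is $\ell_2\times d_3$, so $Y_s$ is $\ell\times d_s$).

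Finally, for the linear independence of the columns within each $Y_s$: since $R_d=(I_d\ \ -\mathbf{1}_d)^t$ contains $I_d$ as a submatrix it has rank $d$, and $\operatorname{rank}(P\otimes Q)=\operatorname{rank}(P)\operatorname{rank}(Q)$, so $Y_1,Y_2,Y_3,Y_4$ have ranks $1=d_1$, $d_2$, $d_3$, $d_2 d_3=d_4$; hence the $d_s$ columns of $Y_s$ are linearly independent. That is the entire argument. I do not expect a genuine obstacle: the only points needing care are the bookkeeping of which building-block eigenvalue is paired with which building-block eigenvector under the tensor product, the verification $\theta_{n-1,m}=\theta_{n-1,p}\theta_{n-1,q}$, and the dimension matching. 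As a consistency check one may note $d_1+d_2+d_3+d_4=\ell_1\ell_2=\ell$, so the columns of $Y$ actually furnish a full eigenbasis of $B_{n,m}$, although only the per-block claim is asserted in the lemma.
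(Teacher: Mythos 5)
Your proposal is correct and is essentially the paper's own argument: the paper also verifies each block via the tensor identity $B_{n,m}=B_{n,p}\otimes B_{n,q}$ and the mixed-product rule (e.g., $B_{n,m}Y_4=(B_{n,p}R_{d_2})\otimes(B_{n,q}R_{d_3})=\lambda_4 Y_4$), deducing linear independence from that of the columns of $R_{d_2}$ and $R_{d_3}$. Your write-up merely makes explicit the identifications $\theta_{n-1,m}=\theta_{n-1,p}\theta_{n-1,q}$ and $p^{n-2}q^{n-2}=m^{n-2}$, which the paper leaves implicit.
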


\begin{proof}
The proof consists of simple verification. For example, when $s=4$, we have that
$
B_{n,m}\cdot Y_4=(B_{n,p}\otimes B_{n,q})\cdot (R_{d_2}\otimes R_{d_3})=
\left(B_{n,p}\cdot
R_{d_2}\right)
\otimes  \left(
B_{n,q}\cdot
R_{d_3}
\right)
=
\left(p^{n-2}\cdot
R_{d_2}
\right)
\otimes
\left(q^{n-2}\cdot
R_{d_3}
\right)
=\lambda_4 \cdot Y_4
$, where the first equality is due to (\ref{eqn:tensorp}).
Similarly, we can verify for $s\in \{1,2,3\}$. The linear independence
of the columns of $Y_s$ can be checked using the linear independence of
the columns of $R_{d_2}$ and $R_{d_3}$.
\end{proof}

\begin{lemma}
\label{lemma:eigenvector}
We have that
\begin{equation*}
\begin{split}
Y^{-1}&=
\ell^{-1}\cdot
\begin{pmatrix}
\textbf{1}_\ell \\
   L_{d_2} \otimes \textbf{1}_{\ell_2}\\
\textbf{1}_{\ell_1}\otimes L_{d_3}\\
L_{d_2}\otimes L_{d_3}
\end{pmatrix}
{\rm~and~} \\
Y^t\cdot Y&=
\begin{pmatrix}
\ell  & O & O & O\\
O & \ell_2K_{d_2} & O & O\\
O & O & \ell_1K_{d_3} & O\\
O & O & O & K_{d_2}\otimes K_{d_3}
\end{pmatrix}.
\end{split}
\end{equation*}
\end{lemma}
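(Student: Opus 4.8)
The plan is to verify both displayed identities by direct block computation, exploiting the mixed-product rule $(A\otimes B)(C\otimes D)=(AC)\otimes(BD)$ together with a short list of elementary identities. Writing $R_d=(I_d\;\;-\textbf{1}_d)^t$ and $L_d=((d+1)I_d-J_d\;\;-\textbf{1}_d)$ as block matrices and multiplying out, one obtains $R_d^t R_d=I_d+J_d=K_d$, $L_dR_d=(d+1)I_d$, $\textbf{1}_{d+1}^t R_d=\textbf{0}_d^t$ (hence also $R_d^t\textbf{1}_{d+1}=\textbf{0}_d$), $L_d\textbf{1}_{d+1}=\textbf{0}_d$, and trivially $\textbf{1}_d^t\textbf{1}_d=d$. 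Since $d_2+1=\ell_1$ and $d_3+1=\ell_2$, these are applied with $d\in\{d_2,d_3\}$; I will also use that $(d_2+1)\ell_2=(d_3+1)\ell_1=(d_2+1)(d_3+1)=\ell$.

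For $Y^tY$: its $(s,t)$ block is $Y_s^tY_t$. Each $Y_s$ is a tensor product of a factor from $\{\textbf{1}_{\ell_1},R_{d_2}\}$ with a factor from $\{\textbf{1}_{\ell_2},R_{d_3}\}$ (with $Y_1=\textbf{1}_{\ell_1}\otimes\textbf{1}_{\ell_2}=\textbf{1}_\ell$), so $Y_s^tY_t$ splits by the mixed-product rule into a tensor product of two small products. Whenever $s\neq t$ the two indices disagree in at least one of the two tensor slots, producing a cross-factor $\textbf{1}_{\ell_1}^tR_{d_2}$, $R_{d_2}^t\textbf{1}_{\ell_1}$, $\textbf{1}_{\ell_2}^tR_{d_3}$ or $R_{d_3}^t\textbf{1}_{\ell_2}$; by the identity $\textbf{1}_{d+1}^tR_d=\textbf{0}_d^t$ such a factor is zero, so every off-diagonal block is $O$. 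On the diagonal, $Y_1^tY_1=\textbf{1}_\ell^t\textbf{1}_\ell=\ell$, $Y_2^tY_2=(R_{d_2}^tR_{d_2})\otimes(\textbf{1}_{\ell_2}^t\textbf{1}_{\ell_2})=\ell_2K_{d_2}$, $Y_3^tY_3=\ell_1K_{d_3}$ by the same reasoning, and $Y_4^tY_4=(R_{d_2}^tR_{d_2})\otimes(R_{d_3}^tR_{d_3})=K_{d_2}\otimes K_{d_3}$. This is exactly the asserted block-diagonal matrix.

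For $Y^{-1}$, write $Z$ for the $\ell\times\ell$ matrix on the right-hand side without the factor $\ell^{-1}$, so that $Z$ has block rows $Z_1=\textbf{1}_\ell^t$, $Z_2=L_{d_2}\otimes\textbf{1}_{\ell_2}^t$, $Z_3=\textbf{1}_{\ell_1}^t\otimes L_{d_3}$, $Z_4=L_{d_2}\otimes L_{d_3}$; since $Y$ is square it suffices to show $ZY=\ell I_\ell$. The $(s,t)$ block is $Z_sY_t$, which again factors via the mixed-product rule, now with each slot contributing one of $\textbf{1}^t\textbf{1}$, $\textbf{1}_{d+1}^tR_d$, $L_d\textbf{1}_{d+1}$, $L_dR_d$; the middle two vanish, so exactly as before only the diagonal blocks survive, and there $Z_1Y_1=\textbf{1}_\ell^t\textbf{1}_\ell=\ell$, $Z_2Y_2=(L_{d_2}R_{d_2})\otimes(\textbf{1}_{\ell_2}^t\textbf{1}_{\ell_2})=(d_2+1)\ell_2\,I_{d_2}=\ell I_{d_2}$, $Z_3Y_3=(\textbf{1}_{\ell_1}^t\textbf{1}_{\ell_1})\otimes(L_{d_3}R_{d_3})=\ell I_{d_3}$, and $Z_4Y_4=(L_{d_2}R_{d_2})\otimes(L_{d_3}R_{d_3})=(d_2+1)(d_3+1)I_{d_4}=\ell I_{d_4}$. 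Hence $ZY=\ell I_\ell$, i.e. $Y^{-1}=\ell^{-1}Z$.

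There is no real obstacle here; the computation is mechanical once the elementary identities are in place. The only thing demanding care is bookkeeping: remembering that $R_d$ is $(d+1)\times d$ while $L_d$ is $d\times(d+1)$, that the all-ones vectors inside the tensor expressions are rows in $Z$ and $Y^t$ but columns in $Y$, and checking that in each off-diagonal block at least one tensor slot genuinely contributes a vanishing cross-factor so that nothing survives off the diagonal.
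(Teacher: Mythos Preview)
Your proof is correct and follows essentially the same approach as the paper's own proof, which simply notes the identities $L_d R_d=(d+1)I_d$, $\textbf{1}\cdot R_d=O$, and $R_d^t R_d=K_d$ and then declares the rest to be ``simple calculations.'' You have merely made those calculations explicit---also recording $L_d\textbf{1}_{d+1}=\textbf{0}_d$ and systematically invoking the mixed-product rule---so there is no substantive difference in method.
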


\begin{proof}
Note that $L_d\cdot R_d=(d+1)\cdot I_d$ and $\textbf{1}_d\cdot R_d=O$ and
 $R_d^t\cdot R_d=K_d$ for   every integer $d$.
Both equalities follow from simple calculations.
\end{proof}

\section{Main Result}\label{mainresult-sec3}

In this section, we present our main result, i.e., a new upper bound for $k(m,n)$, where $m=pq$ is the
product of two distinct primes $p$ and $q$.
As noted in the Section \ref{sec:introduction}, our arguments consist of a series of reductions. First of all,
we reduce the problem of finding an upper bound for $k(m,n)$ to
one of establishing an upper bound for $N_{1,1}$, the number of pairs
$({\bf u}_i, {\bf v}_i)$ of type $(1,1)$.   The latter problem is in turn reduced to the study of
the projective graph ${\bf G}_{n,m}$. More precisely, we follow the techniques of
\cite{DGY10} and use the unique neighbor property of ${\bf G}_{n,m}$. However,
the validity  of the technique  depends on
some expanding property of ${\bf G}_{n,m}$.

\subsection{An Expanding Property}\label{subsec:bound_pq}

We follow the notations of Section \ref{sec:pg}.
In this section, we  show that the projective
graph ${\bf G}_{n,m}$ has some kind of expanding property (see Theorem \ref{thm:bound_NX}), in
the sense that $|N(X)|$ is large for certain choices of $X$. Expanding properties of
the projective graph ${\bf G}_{n,p}$, where $p$ is a prime, have been studied by Alon \cite{Alo86} using the well-known
spectral method. In Section \ref{sec:pg}, we made the observation that
the  graph ${\bf G}_{n,m}$ is a tensor product of the graphs
${\bf G}_{n,p}$ and ${\bf G}_{n,q}$.
This observation enables us to obtain interesting properties
(see Lemmas \ref{lemma:eigenvalue}
and \ref{lemma:eigenvector})  which in turn  facilitate our proof
that  ${\bf G}_{n,m}$ has some kind of expanding property.

Let $\mathbb{N}$ be the set of nonnegative integers and let $\mathbb{R}$ be the field of real numbers.
For any vectors $\phi=(\phi_1,\ldots, \phi_\ell)^t,
\psi=(\psi_1,\ldots, \psi_\ell)^t \in \mathbb{R}^\ell$, we let
 $\langle  \phi,\psi\rangle=\sum_{i=1}^\ell \phi_i\cdot \psi_i$ and
 $\|\phi\|^2=\langle \phi,\phi\rangle$.
Furthermore, we define the weight of $\phi$ to be ${\rm wt}(\phi)=\sum_{i=1}^\ell \phi_i$.
For a set $X\subseteq \mathbb{P}_{n,m}$, we denote by $\chi_{\mbox{\tiny\itshape X}} \in \mathbb{R}^\ell$
its characteristic vector whose
components are labeled by the elements ${\bf u}\in \mathbb{P}_{n,m}$ and
$\chi_{\mbox{\tiny\itshape X}} ({\bf u})=1$ if ${\bf u}\in X$ and 0 otherwise.
 Due to
Lemmas  \ref{lemma:eigenvalue} and \ref{lemma:eigenvector},
the column vectors  of $Y$ form a basis of the vector space $\mathbb{R}^\ell$.
Therefore, there is a real vector
\begin{equation*}
\begin{split}
\alpha&=
\begin{pmatrix}
\alpha_1\\
 \alpha_2\\
  \alpha_3\\
   \alpha_4
\end{pmatrix}
\end{split},
\end{equation*}
where
\begin{equation*}
\begin{split}
\alpha_1=\alpha_{11},~
%\in \mathbb{R}
\alpha_2=
\begin{pmatrix}
\alpha_{21}\\
 \vdots\\
  \alpha_{2d_2}
\end{pmatrix},~
%\in \mathbb{R}^{d_2}
\alpha_3=
\begin{pmatrix}
\alpha_{31}\\
 \vdots\\
  \alpha_{3d_3}
\end{pmatrix},~
%\in \mathbb{R}^{d_3}
\alpha_4=
\begin{pmatrix}
\alpha_{41}\\
 \vdots\\
  \alpha_{4d_4}
\end{pmatrix}
%\in \mathbb{R}^{d_4},
\end{split}
\end{equation*}
such that $\chi_{\mbox{\tiny\itshape X}} $ can be written as a linear combination of the columns of $Y$, say
\begin{equation}
\label{eqn:chi}
\chi_{\mbox{\tiny\itshape X}} =Y\alpha=\sum_{s=1}^4 Y_s \alpha_s.
\end{equation}
Let $\psi=A_{n,m}^t   \chi_{\mbox{\tiny\itshape X}} $.
The main idea of Alon's spectral method in \cite{Alo86} is to establish both a lower
 bound and an upper bound for the following number:
\begin{equation}
\label{eqn:normpsi}
\begin{split}
\|\psi\|^2=\chi_{\mbox{\tiny\itshape X}}^t \cdot B_{n,m} \chi_{\mbox{\tiny\itshape X}}
&=\sum_{r=1}^4  \alpha_r^t  Y_r^t \cdot \sum_{s=1}^4 \lambda_s Y_s \alpha_s\\
&=\sum_{s=1}^4\lambda_s \|Y_s \alpha_s\|^2,
\end{split}
\end{equation}
where the second equality is due to  Lemma \ref{lemma:eigenvalue},
and the third equality follows from the second part of Lemma \ref{lemma:eigenvector}.
For  every  $s\in\{1,2,3,4\}$, we set
\begin{equation}
\label{eqn:defDs}
\Delta_s=\|Y_s\alpha_s\|^2.
\end{equation}

\begin{lemma}
\label{lemma:caldelta}
The quantities $\Delta_1$, $\Delta_2$ and $\Delta_3$ can be written as: % explicit
\begin{equation}
\label{eqn:caldelta}
\begin{split}
\Delta_1=\ell \alpha_{11}^2, \quad
\Delta_2&= \ell_2 (\|\alpha_2\|^2+{\rm wt}(\alpha_2)^2) \quad {\rm and} \quad \\
\Delta_{3}&=\ell_1(\|\alpha_3\|^2+{\rm wt}(\alpha_3)^2).
\end{split}
\end{equation}
\end{lemma}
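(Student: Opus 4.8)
The plan is to read off the three quantities directly from the block-diagonal form of $Y^tY$ already established in Lemma~\ref{lemma:eigenvector}. The starting point is the observation that $\Delta_s=\|Y_s\alpha_s\|^2=\alpha_s^t(Y_s^tY_s)\alpha_s$, so everything reduces to identifying the Gram matrices $Y_1^tY_1,Y_2^tY_2,Y_3^tY_3$. Since $Y=(Y_1\ Y_2\ Y_3\ Y_4)$ and Lemma~\ref{lemma:eigenvector} gives $Y^tY$ as block diagonal with diagonal blocks $\ell$, $\ell_2K_{d_2}$, $\ell_1K_{d_3}$ and $K_{d_2}\otimes K_{d_3}$, we immediately get $Y_1^tY_1=\ell$, $Y_2^tY_2=\ell_2K_{d_2}$ and $Y_3^tY_3=\ell_1K_{d_3}$. (Alternatively one can compute these on the spot from the definitions in (\ref{eqn:defY}) via the mixed-product rule $(A\otimes B)(C\otimes D)=(AC)\otimes(BD)$, together with $\mathbf{1}_d^t\mathbf{1}_d=d$ and the identity $R_d^tR_d=K_d$ noted in the proof of Lemma~\ref{lemma:eigenvector}; e.g. $Y_2^tY_2=(R_{d_2}\otimes\mathbf{1}_{\ell_2})^t(R_{d_2}\otimes\mathbf{1}_{\ell_2})=(R_{d_2}^tR_{d_2})\otimes(\mathbf{1}_{\ell_2}^t\mathbf{1}_{\ell_2})=\ell_2K_{d_2}$.)

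For $s=1$ this gives $\Delta_1=\alpha_{11}\cdot\ell\cdot\alpha_{11}=\ell\alpha_{11}^2$. For $s=2$, expand $K_{d_2}=I_{d_2}+J_{d_2}$ so that $\Delta_2=\ell_2\big(\alpha_2^t\alpha_2+\alpha_2^tJ_{d_2}\alpha_2\big)$; the first term is $\|\alpha_2\|^2$ by definition, and, writing $J_{d_2}=\mathbf{1}_{d_2}\mathbf{1}_{d_2}^t$ and ${\rm wt}(\alpha_2)=\mathbf{1}_{d_2}^t\alpha_2$, the second term is $\alpha_2^tJ_{d_2}\alpha_2=(\mathbf{1}_{d_2}^t\alpha_2)^2={\rm wt}(\alpha_2)^2$. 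Hence $\Delta_2=\ell_2(\|\alpha_2\|^2+{\rm wt}(\alpha_2)^2)$. The case $s=3$ is identical after interchanging the roles of $\ell_1,\ell_2$ and of $d_2,d_3$, giving $\Delta_3=\ell_1(\|\alpha_3\|^2+{\rm wt}(\alpha_3)^2)$.

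There is no genuine obstacle here: the statement is a direct unwinding of Lemma~\ref{lemma:eigenvector}, and the only step that is not purely formal is the identity $v^tJ_dv={\rm wt}(v)^2$, which follows from $J_d=\mathbf{1}_d\mathbf{1}_d^t$. The one detail worth verifying is that the diagonal blocks of $Y^tY$ are matched to $Y_1,Y_2,Y_3$ in the order fixed by the partition $Y=(Y_1\ Y_2\ Y_3\ Y_4)$ in (\ref{eqn:defY}) and by the decomposition $\chi_{\mbox{\tiny\itshape X}}=\sum_{s=1}^4 Y_s\alpha_s$ in (\ref{eqn:chi}); this is immediate. Note that $\Delta_4=\alpha_4^t(K_{d_2}\otimes K_{d_3})\alpha_4$ does not collapse to a comparably simple closed form, which is why the lemma records only $\Delta_1,\Delta_2,\Delta_3$ and $\Delta_4$ is dealt with separately in the expansion argument that follows.
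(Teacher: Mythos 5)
Your proposal is correct and follows essentially the same route as the paper: both read off $Y_s^tY_s$ from the block-diagonal form of $Y^tY$ in Lemma~\ref{lemma:eigenvector} and then expand $\alpha_s^t K_{d}\alpha_s$ via $K_d=I_d+J_d$ and $v^tJ_dv={\rm wt}(v)^2$. Your write-up simply makes explicit the details the paper leaves as ``simple calculation.''
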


\begin{proof}
Lemma \ref{lemma:eigenvector} shows that
$Y_2^tY_2=\ell_2K_{d_2}$. Then we have
\begin{equation*}
\begin{split}
\Delta_2=\|Y_2\alpha_2\|^2
=\alpha_2^t \cdot Y_2^t Y_2\cdot \alpha_2
&=\alpha_2^t\cdot \ell_2 K_{d_2}\cdot \alpha_2\\
&=\ell_2( \|\alpha_2\|^2+{\rm wt}(\alpha_2)^2),
\end{split}
\end{equation*}
which is the second equality.
The first and third equalities can be proved similarly.
\end{proof}

Lemma \ref{lemma:caldelta} allows us to represent $\|\psi\|^2$ as an
explicit function of $\alpha_1$, $\alpha_2$ and $\alpha_3$. Let
\begin{equation}
\label{equ:S_12}
\begin{split}
S_1&=\|\alpha_2\|^2+{\rm wt}(\alpha_2)^2 \quad {\rm and} \quad \\
S_2&=\|\alpha_3\|^2+{\rm wt}(\alpha_3)^2.
\end{split}
\end{equation}
\begin{lemma}
\label{lemma:representation}
We have that
$
\|\psi\|^2=\lambda_4 |X|+\ell(\lambda_1-\lambda_4)\alpha_{11}^2
+\ell_2(\lambda_2-\lambda_4)S_1+\ell_1(\lambda_3-\lambda_4)S_2
$.
\end{lemma}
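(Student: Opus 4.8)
The plan is to start from the decomposition of $\|\psi\|^2$ already established in equation~(\ref{eqn:normpsi}), namely $\|\psi\|^2=\sum_{s=1}^4\lambda_s\Delta_s$ with $\Delta_s=\|Y_s\alpha_s\|^2$, and simply substitute the closed forms we have for each $\Delta_s$. Three of the four quantities are handled by Lemma~\ref{lemma:caldelta}: $\Delta_1=\ell\alpha_{11}^2$, $\Delta_2=\ell_2 S_1$ and $\Delta_3=\ell_1 S_2$ in the notation of~(\ref{equ:S_12}). So the only thing genuinely left to pin down is $\Delta_4=\|Y_4\alpha_4\|^2$, and the point is that we do not need a formula for $\Delta_4$ in terms of $\alpha_4$ alone.

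First I would write $\sum_{s=1}^4\Delta_s=\|\chi_{\mbox{\tiny\itshape X}}\|^2$. Indeed, by~(\ref{eqn:chi}) we have $\chi_{\mbox{\tiny\itshape X}}=\sum_s Y_s\alpha_s$, and the block structure of $Y^tY$ in Lemma~\ref{lemma:eigenvector} shows the four summands $Y_s\alpha_s$ are mutually orthogonal, so $\|\chi_{\mbox{\tiny\itshape X}}\|^2=\sum_s\|Y_s\alpha_s\|^2=\sum_s\Delta_s$. On the other hand $\|\chi_{\mbox{\tiny\itshape X}}\|^2=|X|$ since $\chi_{\mbox{\tiny\itshape X}}$ is a $0/1$ vector. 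This gives $\Delta_4=|X|-\Delta_1-\Delta_2-\Delta_3=|X|-\ell\alpha_{11}^2-\ell_2 S_1-\ell_1 S_2$, eliminating $\alpha_4$ entirely.

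Then I would plug this into $\|\psi\|^2=\lambda_1\Delta_1+\lambda_2\Delta_2+\lambda_3\Delta_3+\lambda_4\Delta_4$, substituting the expression for $\Delta_4$ and collecting terms:
\begin{equation*}
\|\psi\|^2=\lambda_4|X|+(\lambda_1-\lambda_4)\Delta_1+(\lambda_2-\lambda_4)\Delta_2+(\lambda_3-\lambda_4)\Delta_3.
\end{equation*}
Finally substitute $\Delta_1=\ell\alpha_{11}^2$, $\Delta_2=\ell_2 S_1$, $\Delta_3=\ell_1 S_2$ to obtain exactly the claimed identity
\begin{equation*}
\|\psi\|^2=\lambda_4|X|+\ell(\lambda_1-\lambda_4)\alpha_{11}^2+\ell_2(\lambda_2-\lambda_4)S_1+\ell_1(\lambda_3-\lambda_4)S_2.
\end{equation*}

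There is really no main obstacle here: the proof is a short bookkeeping argument once one notices the trick of computing $\Delta_4$ indirectly via $\|\chi_{\mbox{\tiny\itshape X}}\|^2=|X|$ rather than expanding $Y_4\alpha_4$. The only points requiring a line of justification are (i) the orthogonality of the four blocks $Y_s\alpha_s$, which is immediate from the block-diagonal form of $Y^tY$ in Lemma~\ref{lemma:eigenvector}, and (ii) that $\|\chi_{\mbox{\tiny\itshape X}}\|^2$ counts $|X|$, which is trivial. Everything else is substitution and regrouping using Lemma~\ref{lemma:caldelta} and the definitions~(\ref{equ:S_12}), (\ref{eqn:defDs}).
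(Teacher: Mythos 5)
Your proposal is correct and follows essentially the same route as the paper: both eliminate $\Delta_4$ via $|X|=\|\chi_{\mbox{\tiny\itshape X}}\|^2=\Delta_1+\Delta_2+\Delta_3+\Delta_4$ (justified by the block-diagonal form of $Y^tY$ in Lemma \ref{lemma:eigenvector}) and then substitute the closed forms from Lemma \ref{lemma:caldelta} into (\ref{eqn:normpsi}). Your write-up merely spells out the orthogonality and collection-of-terms steps that the paper leaves implicit.
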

\begin{proof}
From Lemma \ref{lemma:eigenvector}, we have that $|X|=\|\chi_{\mbox{\tiny\itshape X}} \|^2=
\Delta_1+\Delta_2+\Delta_3+\Delta_4$. It follows that $\Delta_4=|X|-
\Delta_1-\Delta_2-\Delta_3$.  Along with
(\ref{eqn:normpsi}), (\ref{eqn:defDs}), (\ref{eqn:caldelta}) and (\ref{equ:S_12}), this
implies the expected equality.
\end{proof}

Although Lemma \ref{lemma:representation} gives us a representation of $\|\psi\|^2$
in terms of $|X|$, $\alpha_{11}$, $S_1$ and $S_2$, it can be more explicit if we know
how the quantities $\alpha_{11}$, $S_1$ and $S_2$ are
connected to $X$.
Note that
 $\alpha=Y^{-1}  \chi_{\mbox{\tiny\itshape X}} $ according to  (\ref{eqn:chi}).
Let $Z_1=\ell^{-1}\cdot \textbf{1}_\ell$, $Z_2=\ell^{-1}\cdot L_{d_2} \otimes \textbf{1}_{\ell_2}$ and $
Z_3=\ell^{-1}\cdot \textbf{1}_{\ell_1}\otimes L_{d_3}$. Then
\begin{equation}
\label{equ:alpha_s}
\alpha_s=Z_s  \chi_{\mbox{\tiny\itshape X}}
\end{equation}
for every  $s\in \{1,2,3\}$, by Lemma \ref{lemma:eigenvector}.
As an immediate consequence, we then have that
\begin{equation}
\label{eqn:alpha1}
\alpha_{11}=\alpha_1=Z_1 \chi_{\mbox{\tiny\itshape X}} =\ell^{-1}|X|.
\end{equation}
On the other hand, recall that
 $\mathbb{P}_{n,p}$, $\mathbb{P}_{n,q}$ and $\mathbb{P}_{n,m}$
 have been made unique as ordered sets
in Section \ref{sec:pg}. For every $h\in [\ell]$,
 there exists $(i,j)\in [\ell_1]\times [\ell_2]$ such that ${\bf w}_h=\pi({\bf u}_i, {\bf v}_j)$.
Let $\sigma: \mathbb{P}_{n,m}\rightarrow [\ell_1] $ be the mapping defined by
\begin{equation}
\label{eqn:sigma}
\sigma({\bf w}_h)=\left\lfloor\frac{h-1}{\ell_2}\right\rfloor+1
\end{equation}
and let  $\tau:\mathbb{P}_{n,m}\rightarrow [\ell_2]$ be the mapping defined by
\begin{equation}
\label{eqn:tau}
\tau({\bf w}_h)=h-(\sigma({\bf w}_h)-1)\ell_2.
\end{equation}

\begin{lemma}
\label{lem:detw}
We have that ${\bf w}_h=\pi({\bf u}_{\sigma({\bf w}_h)}, {\bf v}_{\tau({\bf w}_h)})$
for every $h\in [\ell]$.
\end{lemma}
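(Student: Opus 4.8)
The plan is to unwind the definition of the ordered set $\mathbb{P}_{n,m}$ and match it against the explicit formulas for $\sigma$ and $\tau$ in (\ref{eqn:sigma}) and (\ref{eqn:tau}); the statement is a purely arithmetic bookkeeping fact about the enumeration fixed in Section \ref{sec:pg}. Recall that, once $\mathbb{P}_{n,p}=[{\bf u}_1,\ldots,{\bf u}_{\ell_1}]$ and $\mathbb{P}_{n,q}=[{\bf v}_1,\ldots,{\bf v}_{\ell_2}]$ have been made unique as ordered sets, we set $\mathbb{P}_{n,m}=[{\bf w}_1,\ldots,{\bf w}_\ell]=[\pi({\bf u}_1,{\bf v}_1),\pi({\bf u}_1,{\bf v}_2),\ldots,\pi({\bf u}_{\ell_1},{\bf v}_{\ell_2})]$, so that the pair $\pi({\bf u}_i,{\bf v}_j)$ precedes $\pi({\bf u}_{i'},{\bf v}_{j'})$ exactly when $i<i'$, or $i=i'$ and $j<j'$.

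First I would pin down the position of $\pi({\bf u}_i,{\bf v}_j)$ in this list. The pairs appear in $\ell_1$ consecutive blocks of length $\ell_2$, where the $i$-th block consists of $\pi({\bf u}_i,{\bf v}_1),\ldots,\pi({\bf u}_i,{\bf v}_{\ell_2})$ and occupies positions $(i-1)\ell_2+1,\ldots,(i-1)\ell_2+\ell_2$. Hence for every $h\in[\ell]$ there is a unique pair $(i,j)\in[\ell_1]\times[\ell_2]$ with ${\bf w}_h=\pi({\bf u}_i,{\bf v}_j)$ and $h=(i-1)\ell_2+j$; writing this as $h-1=(i-1)\ell_2+(j-1)$ with $0\le j-1<\ell_2$, the division algorithm identifies $i-1=\lfloor(h-1)/\ell_2\rfloor$ and $j-1=(h-1)-\ell_2\lfloor(h-1)/\ell_2\rfloor$ uniquely.

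It then remains to compare with the definitions of $\sigma$ and $\tau$. By (\ref{eqn:sigma}) we get $\sigma({\bf w}_h)=\lfloor(h-1)/\ell_2\rfloor+1=i$, and substituting into (\ref{eqn:tau}) gives $\tau({\bf w}_h)=h-(\sigma({\bf w}_h)-1)\ell_2=h-(i-1)\ell_2=j$. Therefore ${\bf w}_h=\pi({\bf u}_i,{\bf v}_j)=\pi({\bf u}_{\sigma({\bf w}_h)},{\bf v}_{\tau({\bf w}_h)})$, as claimed. I do not expect a genuine obstacle; the only subtlety is the off-by-one shift between $h$ and $h-1$ in the floor and remainder, which is precisely why (\ref{eqn:sigma})--(\ref{eqn:tau}) are written in terms of $h-1$. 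A trivial induction on $h$ would work equally well, but the direct division-algorithm argument is cleanest.
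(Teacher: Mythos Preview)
Your proposal is correct and follows essentially the same approach as the paper: both start from the ordering convention that gives $h=(i-1)\ell_2+j$ for the unique $(i,j)$ with ${\bf w}_h=\pi({\bf u}_i,{\bf v}_j)$, and then identify $i=\sigma({\bf w}_h)$ and $j=\tau({\bf w}_h)$ from (\ref{eqn:sigma})--(\ref{eqn:tau}). The paper just says ``it is easy to see,'' whereas you spell out the division-algorithm step explicitly.
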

\begin{proof}
Suppose that ${\bf w}_h=\pi({\bf u}_i, {\bf v}_j)$ for $(i,j)\in [\ell_1]\times [\ell_2]$.
  Then the representation of $\mathbb{P}_{n,m}$ in Section \ref{sec:pg} shows that
$h=(i-1)\ell_2+j$. It is easy to see that $i=\sigma({\bf w}_h)$ and $j=\tau({\bf w}_h)$.
\end{proof}

For every $i\in [\ell_1]$ and $j\in [\ell_2]$,  let
$
\sigma^{-1}(i)$ be the preimage of $i$ under $\sigma$ and let
$\tau^{-1}(j)$ be the  preimage of $j$ under $\tau$.
Let ${\bf a}\in \mathbb{R}^{\ell_1}$ and ${\bf b}\in\mathbb{R}^{\ell_2}$ be  two real vectors
defined by
\begin{equation}
\label{eqn:defab}
\begin{split}
{\bf a}(i)=|\sigma^{-1}(i)\cap X| \quad {\rm and} \quad
{\bf b}(j)=|\tau^{-1}(j)\cap X|,
\end{split}
\end{equation}
where  $i\in [\ell_1]$ and $j\in [\ell_2]$.
Then we clearly  have that ${\rm wt}({\bf a})=
{\rm wt}({\bf b})=|X|$.
\begin{lemma}
\label{lem:S_12}
We have that
$S_1 = \ell^{-2}\ell_1 (\ell_1 \|{\bf a}\|^2-|X|^2)$ and $
S_2 = \ell^{-2}\ell_2 (\ell_2 \|{\bf b}\|^2-|X|^2)$.
\end{lemma}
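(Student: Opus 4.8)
The plan is to reduce $S_1$ and $S_2$ to quadratic forms in the vectors ${\bf a}$ and ${\bf b}$ defined in (\ref{eqn:defab}), by exploiting the tensor structure of $Z_2$ and $Z_3$ and then carrying out one short computation with the matrices $K_d$, $L_d$, $R_d$.

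First I would identify $\mathbb{R}^\ell$ with $\mathbb{R}^{\ell_1}\otimes\mathbb{R}^{\ell_2}$ compatibly with the ordering of $\mathbb{P}_{n,m}$ fixed in Section~\ref{sec:pg}: the standard basis vector indexed by $h=(i-1)\ell_2+j$ corresponds to $e_i\otimes e_j$. By Lemma~\ref{lem:detw} together with (\ref{eqn:sigma})--(\ref{eqn:tau}), the fibre $\sigma^{-1}(i)$ is exactly the $i$th ``slice'' and $\tau^{-1}(j)$ the $j$th one, so that ${\bf a}(i)$ counts the elements of $X$ in the $i$th slice and ${\bf b}(j)$ those in the $j$th. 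Applying $Z_2=\ell^{-1}L_{d_2}\otimes\textbf{1}_{\ell_2}$ to $\chi_{\mbox{\tiny\itshape X}}$ and using $\textbf{1}_{\ell_2}e_j=1$, the tensor collapses onto the first factor and, via (\ref{equ:alpha_s}), one obtains $\alpha_2=Z_2\chi_{\mbox{\tiny\itshape X}}=\ell^{-1}L_{d_2}{\bf a}$; symmetrically $\alpha_3=\ell^{-1}L_{d_3}{\bf b}$.

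Next, from the definition (\ref{equ:S_12}) of $S_1$ and the identity $K_{d_2}=I_{d_2}+J_{d_2}$ one has $S_1=\|\alpha_2\|^2+{\rm wt}(\alpha_2)^2=\alpha_2^{t}K_{d_2}\alpha_2$, and likewise $S_2=\alpha_3^{t}K_{d_3}\alpha_3$. Substituting $\alpha_2=\ell^{-1}L_{d_2}{\bf a}$ gives $S_1=\ell^{-2}\,{\bf a}^{t}\bigl(L_{d_2}^{t}K_{d_2}L_{d_2}\bigr){\bf a}$, so the only genuine computation is to evaluate $L_d^{t}K_dL_d$ for $d=d_2$ (and, analogously, $d=d_3$). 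Using $J_d^2=dJ_d$ one checks $K_dL_d=(d+1)R_d^{t}$ --- this is the identity $(I_d+J_d)\bigl((d+1)I_d-J_d\bigr)=(d+1)I_d$ combined with $(I_d+J_d)\textbf{1}_d=(d+1)\textbf{1}_d$ --- and then $R_dL_d=(d+1)I_{d+1}-J_{d+1}$ by a direct block computation, whence $L_d^{t}K_dL_d=(d+1)(R_dL_d)^{t}=(d+1)\bigl((d+1)I_{d+1}-J_{d+1}\bigr)$. Putting $d+1=\ell_1$ and recalling ${\rm wt}({\bf a})=|X|$ yields $S_1=\ell^{-2}\ell_1\bigl(\ell_1\|{\bf a}\|^2-|X|^2\bigr)$; the argument for $S_2$ is identical with $\ell_1,{\bf a}$ replaced by $\ell_2,{\bf b}$.

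I expect the only delicate point to be bookkeeping: keeping straight which $\textbf{1}$'s inside the tensor products are rows and which are columns, so that the collapse $\textbf{1}_{\ell_2}e_j=1$ is legitimate and all dimensions match, and verifying that the index correspondence $h\leftrightarrow(i,j)$ is exactly the one encoded by $\sigma$ and $\tau$ (hence by ${\bf a}$, ${\bf b}$). The matrix identity $L_d^{t}K_dL_d=(d+1)\bigl((d+1)I_{d+1}-J_{d+1}\bigr)$ is then routine, following from the same elementary relations among $I_d$, $J_d$, $K_d$, $L_d$, $R_d$ already used in the proof of Lemma~\ref{lemma:eigenvector}.
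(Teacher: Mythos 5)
Your proposal is correct and follows essentially the same route as the paper: both exploit the tensor structure of $Z_2=\ell^{-1}L_{d_2}\otimes\textbf{1}_{\ell_2}$ to collapse $\chi_{\mbox{\tiny\itshape X}}$ onto ${\bf a}$ (your $\alpha_2=\ell^{-1}L_{d_2}{\bf a}$ is exactly the paper's componentwise formula $\alpha_{2i}=\ell^{-1}(\ell_1{\bf a}(i)-|X|)$), and then evaluate $\|\alpha_2\|^2+{\rm wt}(\alpha_2)^2$. The only difference is cosmetic: you package the final step as the matrix identity $L_d^tK_dL_d=(d+1)\bigl((d+1)I_{d+1}-J_{d+1}\bigr)$, whereas the paper expands the sums directly; both yield $\ell^{-2}\ell_1(\ell_1\|{\bf a}\|^2-|X|^2)$, and symmetrically for $S_2$.
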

\begin{proof}
For every $i\in [d_2]$, the $i$th row  of $Z_2$ is
\begin{equation*}
\begin{split}
Z_2[i]&=\ell^{-1}
\begin{pmatrix}
-\textbf{1}_{i-1} & \ell_1-1 & -\textbf{1}_{\ell_1-i}
\end{pmatrix} \otimes \textbf{1}_{\ell_2}\\
&=
\ell^{-1}\ell_1
\begin{pmatrix}
\textbf{0}_{i-1}  &  1 & \textbf{0}_{\ell_1-i}
\end{pmatrix}\otimes \textbf{1}_{\ell_2}-\ell^{-1} \textbf{1}_{\ell}.
\end{split}
\end{equation*}
Let $T = \ell^{-1}\ell_1
\begin{pmatrix}
\textbf{0}_{i-1}  &  1 & \textbf{0}_{\ell_1-i}
\end{pmatrix}\otimes \textbf{1}_{\ell_2}$, so that
\begin{equation*}
Z_2[i] = T - \ell^{-1} \textbf{1}_{\ell}.
\end{equation*}
The nonzero components of  $T$ are labeled by
$\sigma^{-1}(i)$. It follows that
 $T \cdot  \chi_{\mbox{\tiny\itshape X}} =\ell^{-1}\ell_1 {\bf a}(i)$ and therefore
$$\alpha_{2i}= Z_2[i]\cdot \chi_{\mbox{\tiny\itshape X}} =T\cdot \chi_{\mbox{\tiny\itshape X}} -\ell^{-1} \textbf{1}_{\ell}\cdot \chi_{\mbox{\tiny\itshape X}}
=\ell^{-1}(\ell_1\cdot {\bf a}(i)-|X|).$$
Note that ${\rm wt}({\bf a})={\bf a}(1)+\cdots+{\bf a}(\ell_1)=|X|$ and $d_2=\ell_1-1$.
Due to (\ref{equ:S_12}), we have that
\begin{equation*}
\begin{split}
S_1=\|\alpha_2\|^2+{\rm wt}(\alpha_2)^2
&=\sum_{i=1}^{d_2} \alpha_{2i}^2+ \Big(\sum_{i=1}^{d_2} \alpha_{2i}\Big)^2\\
&=\ell^{-2}\ell_1(\ell_1\cdot \|{\bf a}\|^2-|X|^2),
\end{split}
\end{equation*}
which is the first equality.

For every $j\in[d_3]$, the  $j$th row  of $Z_3$ is
\begin{equation*}
\begin{split}
Z_3[j]&=\ell^{-1}
\textbf{1}_{\ell_1}\otimes
\begin{pmatrix}
-\textbf{1}_{j-1} & \ell_2-1 & -\textbf{1}_{\ell_2-j}
\end{pmatrix}\\
&=
\ell^{-1}\ell_2 \textbf{1}_{\ell_1}\otimes
\begin{pmatrix}
\textbf{0}_{j-1} & 1 & \textbf{0}_{\ell_2-j}
\end{pmatrix}-\ell^{-1} \textbf{1}_{\ell}.
\end{split}
\end{equation*}
Let $T' = \ell^{-1}\ell_2 \textbf{1}_{\ell_1}\otimes
\begin{pmatrix}
\textbf{0}_{j-1} & 1 & \textbf{0}_{\ell_2-j}
\end{pmatrix}$, so that
\begin{equation*}
Z_3[j] = T' - \ell^{-1} \textbf{1}_{\ell}.
\end{equation*}
The nonzero components of $T'$ are labeled by
$\tau^{-1}(j)$.
It follows that
 $T' \cdot  \chi_{\mbox{\tiny\itshape X}} =\ell^{-1}\ell_2 {\bf b}(j)$ and therefore
$$\alpha_{3j}= Z_3[j]\cdot \chi_{\mbox{\tiny\itshape X}}  =T' \cdot \chi_{\mbox{\tiny\itshape X}} -\ell^{-1} \textbf{1}_{\ell}\cdot \chi_{\mbox{\tiny\itshape X}}
=\ell^{-1}(\ell_2 {\bf b}(j)-|X|).$$
Note that ${\rm wt}({\bf b})={\bf b}(1)+\cdots+{\bf b}(\ell_2)=|X|$ and $d_3=\ell_2-1$.
Due to (\ref{equ:S_12}), we have that
\begin{equation*}
\begin{split}
S_2=\|\alpha_3\|^2+{\rm wt}(\alpha_3)^2
&=\sum_{i=1}^{d_3} \alpha_{3i}^2+ \Big(\sum_{i=1}^{d_3} \alpha_{3i}\Big)^2\\
&=\ell^{-2}\ell_2(\ell_2\cdot \|{\bf b}\|^2-|X|^2),
\end{split}
\end{equation*}
which is the second equality.
\end{proof}

Lemmas \ref{lemma:representation} and \ref{lem:S_12}, together with (\ref{eqn:alpha1}),
result in an explicit representation of $\|\psi\|^2$ in terms of $X$:
\begin{equation}
\label{equ:normpsi}
\begin{split}
\|\psi\|^2
=\lambda_4|X|&+\ell^{-1}(\lambda_1-\lambda_4)|X|^2\\
&+(\lambda_2-\lambda_4) \ell^{-1}(\ell_1  \|{\bf a}\|^2-|X|^2)
\\
&+(\lambda_3-\lambda_4)\ell^{-1} (\ell_2 \|{\bf b}\|^2-|X|^2).
\end{split}
\end{equation}
For simplicity, we denote by $F({\bf a}, {\bf b})$ the
right hand side of Equation (\ref{equ:normpsi}).
We aim to deduce an upper bound for $F({\bf a}, {\bf b})$
 in terms
of $|X|$. Clearly, this also provides an upper bound for $\|\psi\|^2$ and
is crucial
 for
establishing that ${\bf G}_{n,m}$ has some kind of expanding property.
Let
\begin{equation}
\begin{split}
\kappa_p=\lfloor 4p^{0.5n}+2 \rfloor
\quad {\rm and} \quad
\kappa_q=\lfloor 4q^{0.5n}+2\rfloor.
\end{split}
\end{equation}
Dvir {\it et al.}~\cite{DGY10} showed that $k(p,n)\leq \kappa_p$ and $k(q,n)\leq \kappa_q$.
Let ${\cal U}, {\cal V}\subseteq \mathbb{P}_{n,m}$ form a matching family. From now on, we
suppose that $X\subseteq {\cal U}$ and, furthermore, that its cardinality
$|X|=x\leq \min\{\kappa_q \ell_1, \kappa_p \ell_2\}$ is fixed. We remark that this assumption
does not affect our proof adversely (see
Theorem \ref{thm:upper_bound_m_infty}).

\begin{lemma}\label{lem:bound_aibj}
Let  ${\bf a}, {\bf b}$ be the real vectors defined by {\rm (\ref{eqn:defab})}.
 Then we have that ${\bf a}(i)\leq \kappa_q$ for every $i\in[\ell_1]$ and
  ${\bf b}(j)\leq \kappa_p$ for every
$j\in [\ell_2]$.
\end{lemma}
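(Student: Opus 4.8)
The plan is to read off ${\bf a}(i)$ as a count of matching--family pairs and then compress those pairs into a matching family modulo $q$. Write the given matching family as $\mathcal{U}=\{U_1,\ldots,U_k\}$, $\mathcal{V}=\{V_1,\ldots,V_k\}\subseteq \mathbb{P}_{n,m}$ with $\langle U_a,V_a\rangle\equiv 0\bmod m$ for every $a$ and $\langle U_a,V_b\rangle\not\equiv 0\bmod m$ for every $a\neq b$. Fix $i\in[\ell_1]$ and put $I=\{a\in[k]:U_a\in X \text{ and } \sigma(U_a)=i\}$. By Lemma~\ref{lem:detw} and the definition of $\pi$, the condition $\sigma(U_a)=i$ says precisely that $U_a$ has a representative ${\bf c}_a\in\mathbb{S}_{n,m}$ with ${\bf c}_a(t)\equiv{\bf u}_i(t)\bmod p$ for all $t\in[n]$; and since distinct elements of $\mathcal{U}$ are inequivalent, $a\mapsto U_a$ restricts to a bijection from $I$ onto $\sigma^{-1}(i)\cap X$, so ${\bf a}(i)=|I|$. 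It therefore suffices to prove $|I|\leq k(q,n)$, because $k(q,n)\leq\kappa_q$.

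First I would pin down, modulo $p$, all the dot products of the pairs indexed by $I$. Fixing representatives ${\bf d}_a$ of $V_a$, the diagonal condition $\langle U_a,V_a\rangle\equiv 0\bmod m$ gives $\langle{\bf c}_a,{\bf d}_a\rangle\equiv 0\bmod p$ for $a\in I$, and since ${\bf c}_a\equiv{\bf u}_i\bmod p$ coordinatewise this forces $\langle{\bf u}_i,{\bf d}_a\rangle\equiv 0\bmod p$ for every $a\in I$. Substituting back, for all $a,b\in I$ (not merely $a=b$),
\begin{equation*}
\langle{\bf c}_a,{\bf d}_b\rangle\equiv\langle{\bf u}_i,{\bf d}_b\rangle\equiv 0\bmod p.
\end{equation*}
Now reduce modulo $q$: for $a\in I$ we have $\langle{\bf c}_a,{\bf d}_a\rangle\equiv 0\bmod q$, while for $a\neq b$ in $I$ the value $\langle{\bf c}_a,{\bf d}_b\rangle$ is nonzero modulo $m$ yet zero modulo $p$, hence nonzero modulo $q$. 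Thus $\{{\bf c}_a\bmod q:a\in I\}$ and $\{{\bf d}_a\bmod q:a\in I\}$ satisfy both requirements of Definition~\ref{def:mf} and form a matching family of size $|I|$ in $\mathbb{Z}_q^n$ (the vectors on each side are automatically distinct, as the conditions already force this), so $|I|\leq k(q,n)$ and ${\bf a}(i)\leq\kappa_q$. The bound ${\bf b}(j)\leq\kappa_p$ follows verbatim with the roles of $p$ and $q$ interchanged and with $\tau$, ${\bf v}_j$ in place of $\sigma$, ${\bf u}_i$.

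I do not expect a genuine obstacle: the mechanism --- that all cross products of the selected pairs vanish modulo $p$, which then upgrades the $m$-nonvanishing condition into a $q$-nonvanishing condition --- is very short. The one place that demands care is the opening bookkeeping, namely identifying ${\bf a}(i)$ with $|I|$ and passing correctly between a point of $\mathbb{P}_{n,m}$, a chosen representative in $\mathbb{S}_{n,m}$, and its reduction modulo $p$; this is exactly what Lemma~\ref{lem:detw} and the well-definedness of $\pi$ license, together with the elementary fact that a vector primitive modulo $m$ stays primitive modulo $p$ and modulo $q$, so the classes in $\mathbb{P}_{n,p}$ referred to by $\sigma$ are well defined.
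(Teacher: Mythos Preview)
Your proposal is correct and follows essentially the same approach as the paper: you observe that all elements of $\sigma^{-1}(i)\cap X$ are equivalent modulo $p$, so every cross dot product vanishes modulo $p$, which upgrades the mod-$m$ nonvanishing to mod-$q$ nonvanishing and yields a matching family in $\mathbb{Z}_q^n$ bounded in size by $\kappa_q$. The only cosmetic difference is that the paper argues by contradiction while you argue directly, and you are slightly more explicit about choosing representatives and about why the reduced vectors are distinct.
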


\begin{proof}
Suppose that  ${\bf a}(i)>\kappa_q$ for some $i\in [\ell_1]$.
Let $\mathcal{U}^\prime= \sigma^{-1}(i)\cap X\triangleq \{{\bf u}^\prime_{s}:
s\in [{\bf a}(i)]\}\subseteq {\cal U}$.  Then, by the definition
of matching families,  there is a subset  of $\mathcal{V}$, say
$\mathcal{V}^\prime=\{{\bf v}^\prime_{s}: s\in [{\bf a}(i)]\}$, such that
$\mathcal{U}^\prime$ and $\mathcal{V}^\prime$ form
 a matching family.
It follows that
\begin{itemize}
\item $ \langle {\bf u}^\prime_{s},  {\bf v}^\prime_{s}\rangle
\equiv 0\bmod m$ for every $s\in [{\bf a}(i)]$;
\item $\langle  {\bf u}^\prime_{s},  {\bf v}^\prime_{t}\rangle
 \not\equiv 0 \bmod m$ whenever $s,t\in [ {\bf a}(i)]$ and  $s\neq t$.
\end{itemize}
On the one hand, we immediately have that
\begin{itemize}
\item $ \langle {\bf u}^\prime_{s},  {\bf v}^\prime_{s}\rangle
\equiv 0\bmod q$ for every $s\in [{\bf a}(i)]$.
\end{itemize}
On the other hand, Lemma \ref{lem:detw} shows that any two elements in $\mathbb{P}_{n,m}$ are equivalent to each other
as elements of $\mathbb{Z}_p^n$ as long as they have the same image under $\sigma$.
Therefore,  ${\bf u}^\prime_s\sim {\bf u}^\prime_t$ as elements  of $\mathbb{Z}_p^n$
for any  $s,t\in [ {\bf a}(i)]$. It follows that
$\langle {\bf u}^\prime_{s}, {\bf v}^\prime_{t} \rangle \equiv
\langle {\bf u}^\prime_{t}, {\bf v}^\prime_{t} \rangle \equiv 0\bmod p$. Recall that
 $\langle  {\bf u}^\prime_{s},  {\bf v}^\prime_{t}\rangle
 \not\equiv 0 \bmod m$ whenever $s\neq t$. It follows that
\begin{itemize}
\item $\langle  {\bf u}^\prime_{s},  {\bf v}^\prime_{t}\rangle
 \not\equiv 0 \bmod q$ whenever $s,t\in [ {\bf a}(i)]$ and  $s\neq t$.
\end{itemize}
Therefore, $\mathcal{U}^{\prime}$ and $\mathcal{V}^{\prime}$ form a matching family
in $\mathbb{Z}_q^{n}$
 of size ${\bf a}(i)>\kappa_q$, which contradicts Dvir {\it et al.} \cite{DGY10}.
Hence, we must have that ${\bf a}(i)\leq \kappa_q$ for every
$i\in [\ell_1]$.

Similarly,  we must have that
${\bf b}(j)\leq \kappa_p$ for every  $j\in [\ell_2]$.
\end{proof}

Lemma \ref{lem:bound_aibj} shows that
the components of ${\bf a}$ and ${\bf b}$ cannot be too large when $X\subseteq \mathcal{U}$.
In fact, several conditions must be satisfied by the real vectors ${\bf a}$ and ${\bf b}$, which
can be summarized as follows:
\begin{itemize}
\item $0\leq {\bf a}(i)\leq \kappa_q$ for every $i\in [\ell_1]$, due to Lemma \ref{lem:bound_aibj}
and Equation (\ref{eqn:defab});
\item $0\leq {\bf b}(j)\leq \kappa_p$ for every $j\in [\ell_2]$, due to
 Lemma \ref{lem:bound_aibj}
and Equation (\ref{eqn:defab});
\item  ${\rm wt}({\bf a})={\rm wt}({\bf b})=|X|=x$, due to Equation (\ref{eqn:defab}).
\end{itemize}
Clearly, when $x$ is fixed,  the problem of establishing  an upper bound for $F({\bf a}, {\bf b})$ can be reduced to that of
determining the maximum value of $F({\bf a}, {\bf b})$ subject to the  conditions
above.
Let
 \begin{equation}
\begin{split}
\mu_q&=\left\lfloor \frac{x}{\kappa_q}\right \rfloor, \quad
\nu_q=x-\kappa_q\mu_q, \quad\\
{\bf a}^*&=\begin{pmatrix}  \kappa_q\cdot \textbf{1}_{\mu_q} & \nu_q &
\textbf{0}_{\ell_1-1-\mu_q}
\end{pmatrix},\\
\mu_p&=\left\lfloor \frac{x}{\kappa_p}\right \rfloor, \quad
\nu_p=x-\kappa_p\mu_p, \quad\\
{\bf b}^* &= \begin{pmatrix}  \kappa_p\cdot \textbf{1}_{ \mu_p} &
\nu_p & \textbf{0}_{\ell_2-1-\mu_p} \end{pmatrix}.
\end{split}
\end{equation}
We show below that $F({\bf a}^*, {\bf b}^*)$  is the maximum value of
$F({\bf a}, {\bf b})$ subject to the conditions above.

\begin{lemma}
\label{lem:comp}
Let $a,b,c,d\in \mathbb{N}$ be such that $a\geq b$,  $c\geq d$, and $a+b=c+d$. If $a\geq c$, then
$a^2+b^2\geq c^2+d^2$.
\end{lemma}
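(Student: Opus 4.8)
The plan is to reduce the inequality to a single factorization. Set $s=a+b=c+d$, so that $b=s-a$ and $d=s-c$. Then
\begin{equation*}
(a^2+b^2)-(c^2+d^2)=\bigl(a^2+(s-a)^2\bigr)-\bigl(c^2+(s-c)^2\bigr)
=2(a^2-c^2)-2s(a-c)=2(a-c)(a+c-s).
\end{equation*}
Since $s=c+d$, we have $a+c-s=a-d$. Hence the difference equals $2(a-c)(a-d)$.

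It then suffices to check that both factors are nonnegative. We are given $a\geq c$, so $a-c\geq 0$; and from $a\geq c\geq d$ we get $a\geq d$, so $a-d\geq 0$. Therefore $(a^2+b^2)-(c^2+d^2)=2(a-c)(a-d)\geq 0$, which is the claim. (Equivalently, one may note that the map $x\mapsto x^2+(s-x)^2$ is nondecreasing on the interval $[s/2,s]$, and that both $a$ and $c$ lie in that interval because $a\geq b=s-a$ and $c\geq d=s-c$; combined with $a\geq c$ this gives the result. I would present whichever of the two phrasings is shorter.) There is no real obstacle here: the only thing to get right is the algebraic identity above, in particular the rewriting $a+c-s=a-d$ that makes the nonnegativity of the second factor visible from the hypotheses $a\geq c\geq d$.
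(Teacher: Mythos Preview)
Your proof is correct and follows essentially the same route as the paper: both compute $a^2+b^2-c^2-d^2$ and factor it as $(a-c)$ times a nonnegative quantity. The paper writes the second factor as $a+c-b-d$ (nonnegative since $a\geq b$ and $c\geq d$), while you simplify it further to $2(a-d)$ via the substitution $s=a+b=c+d$; these are the same expression, and your justification $a\geq c\geq d$ for $a-d\geq 0$ is equally valid.
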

\begin{proof}
Clearly, we have that $a^2+b^2-c^2-d^2=(a-c)(a+c)+(b-d)(b+d)=(a-c)(a+c)-(a-c)
(b+d)=(a-c)(a+c-b-d)\geq 0$, where the second equality follows from $a+b=c+d$ and the last
inequality follows from $a\geq b$, $c\geq d$ and $a\geq c$.
\end{proof}

\begin{lemma}
\label{lem:boundS}
We have that $\|\psi\|^2=F({\bf a}, {\bf b})\leq F({\bf a}^*, {\bf b}^*)$.
\end{lemma}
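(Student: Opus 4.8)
The plan is to show that $F(\mathbf{a},\mathbf{b})$ is maximized coordinatewise — separately in $\mathbf{a}$ and in $\mathbf{b}$ — by the ``greedy'' extremal vectors $\mathbf{a}^*$ and $\mathbf{b}^*$. Looking at the explicit formula~(\ref{equ:normpsi}), the only places where $\mathbf{a}$ and $\mathbf{b}$ enter are through $\|\mathbf{a}\|^2$ and $\|\mathbf{b}\|^2$ (all the other terms depend only on $|X|=x$, which is fixed). Moreover, since $n>1$ we have $\lambda_2-\lambda_4>0$ and $\lambda_3-\lambda_4>0$ (this follows from Lemma~\ref{lemma:eigenvalue}: $\lambda_2=p^{n-2}\theta_{n-1,q}^2$ and $\lambda_4=m^{n-2}=p^{n-2}q^{n-2}$, and $\theta_{n-1,q}>q^{(n-2)/2}$; similarly for $\lambda_3$). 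Hence $F$ is a strictly increasing function of $\|\mathbf{a}\|^2$ and of $\|\mathbf{b}\|^2$, and it suffices to prove the two claims: among all $\mathbf{a}\in\mathbb{N}^{\ell_1}$ with $0\le \mathbf{a}(i)\le \kappa_q$ and $\mathrm{wt}(\mathbf{a})=x$, the maximum of $\|\mathbf{a}\|^2$ is attained at $\mathbf{a}^*$; and the analogous statement for $\mathbf{b}$ with bound $\kappa_p$.

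For the $\mathbf{a}$-claim I would argue by a standard exchange/compression argument built on Lemma~\ref{lem:comp}. Take any feasible $\mathbf{a}$ maximizing $\|\mathbf{a}\|^2$. If some coordinate $\mathbf{a}(i)$ satisfies $0<\mathbf{a}(i)<\kappa_q$ while some other coordinate $\mathbf{a}(j)$ also satisfies $0<\mathbf{a}(j)<\kappa_q$ (or more generally is not yet pushed to an endpoint), then transfer mass from the smaller to the larger: replace the pair $(\mathbf{a}(i),\mathbf{a}(j))$ by $(\mathbf{a}(i)+\mathbf{a}(j),0)$ if $\mathbf{a}(i)+\mathbf{a}(j)\le\kappa_q$, or by $(\kappa_q,\mathbf{a}(i)+\mathbf{a}(j)-\kappa_q)$ otherwise. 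In either case, writing $a=\max$ of the new pair, $b=\min$ of the new pair, $c=\max$ of the old, $d=\min$ of the old, we have $a\ge b$, $c\ge d$, $a+b=c+d$, and $a\ge c$, so Lemma~\ref{lem:comp} gives $a^2+b^2\ge c^2+d^2$ and hence $\|\mathbf{a}\|^2$ does not decrease. Iterating, every coordinate is eventually driven to $\{0,\kappa_q\}$ except at most one ``remainder'' coordinate; sorting the coordinates into decreasing order (which does not change $\|\mathbf{a}\|^2$) puts us exactly at $\mathbf{a}^*=(\kappa_q\mathbf{1}_{\mu_q},\ \nu_q,\ \mathbf{0}_{\ell_1-1-\mu_q})$, where $\mu_q=\lfloor x/\kappa_q\rfloor$ and $\nu_q=x-\kappa_q\mu_q$. (One must check feasibility of $\mathbf{a}^*$, i.e.\ $\mu_q+1\le\ell_1$; this is guaranteed by the standing assumption $x\le\min\{\kappa_q\ell_1,\kappa_p\ell_2\}$.) The $\mathbf{b}$-claim is identical with $\kappa_q,\ell_1$ replaced by $\kappa_p,\ell_2$.

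Combining: for an arbitrary feasible pair $(\mathbf{a},\mathbf{b})$ we get $\|\mathbf{a}\|^2\le\|\mathbf{a}^*\|^2$ and $\|\mathbf{b}\|^2\le\|\mathbf{b}^*\|^2$, and since the coefficients $(\lambda_2-\lambda_4)\ell^{-1}\ell_1$ and $(\lambda_3-\lambda_4)\ell^{-1}\ell_2$ multiplying these quantities in $F$ are nonnegative, we conclude $F(\mathbf{a},\mathbf{b})\le F(\mathbf{a}^*,\mathbf{b}^*)$. Since $\|\psi\|^2=F(\mathbf{a},\mathbf{b})$ by~(\ref{equ:normpsi}) and the vectors $\mathbf{a},\mathbf{b}$ associated to our fixed $X$ are feasible by Lemma~\ref{lem:bound_aibj} and~(\ref{eqn:defab}), the lemma follows.

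The step I expect to be the main obstacle — though it is conceptual rather than long — is making the exchange argument rigorous: one must be careful that a single transfer step can be applied repeatedly and terminates, and that each step genuinely stays inside the feasible polytope (the box constraints $0\le\mathbf{a}(i)\le\kappa_q$ together with the fixed weight $x$). The cleanest framing is to note that $\|\mathbf{a}\|^2$ is a convex function on the polytope $\{\mathbf{a}:0\le\mathbf{a}(i)\le\kappa_q,\ \mathrm{wt}(\mathbf{a})=x\}$, so its maximum is attained at a vertex; the vertices of this polytope are exactly the integer points with all coordinates in $\{0,\kappa_q\}$ except at most one, i.e.\ permutations of $\mathbf{a}^*$; and among those $\|\cdot\|^2$ is permutation-invariant, so $\mathbf{a}^*$ is optimal. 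Either presentation works; the exchange version has the advantage of using only the already-proved Lemma~\ref{lem:comp}.
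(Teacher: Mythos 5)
Your proposal is correct and follows essentially the same route as the paper: reduce to showing $\|{\bf a}\|^2\le\|{\bf a}^*\|^2$ and $\|{\bf b}\|^2\le\|{\bf b}^*\|^2$ under the box and weight constraints, then apply Lemma \ref{lem:comp} in an iterated mass-transfer argument (the paper formalizes this as the explicit carrying algorithm of Figure 2). Your convexity/vertex observation is a clean alternative way to finish the same reduction, but the substance of the argument is identical.
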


\begin{proof}
First, we note that the  vectors ${\bf a}^*$ and ${\bf b}^*$ satisfy the three conditions above.
In order to show that $F({\bf a}, {\bf b})\leq F({\bf a}^*, {\bf b}^*)$, in view of  (\ref{equ:normpsi}),
 it suffices to show that
$\|{\bf a}\|^2\leq \|{\bf a}^*\|^2$ and
$\|{\bf b}\|^2\leq \|{\bf b}^*\|^2$.
We only show the first inequality. The second one can be proved similarly.

Without loss of generality, we can suppose that
${\bf a}(1)\geq {\bf a}(2)\geq \cdots \geq {\bf a}(\ell_1)$.
From Lemma \ref{lem:bound_aibj}, we have that ${\bf a}(i)\leq \kappa_q$ for every
$i\in [\ell_1]$. We claim that the algorithm in Figure 2  takes as input the original vector
${\bf a}_0={\bf a}$ and produces a sequence of vectors, say ${\bf a}_0,{\bf a}_1,\ldots, {\bf a}_h$,
such that
\begin{itemize}
\item  $\kappa_q\geq {\bf a}_s(1)\geq {\bf a}_s(2)
\geq \cdots \geq {\bf a}_s(\ell_1)\geq 0$ and
${\rm wt}({\bf a}_s)=x$ for every $s\in \{0,1,\ldots, h\}$; and
\item ${\bf a}_0={\bf a}$, ${\bf a}_h={\bf a}^*$ and $\|{\bf a}_s\|^2\leq \|{\bf a}_{s+1}\|^2$
for every $s\in \{0,1,\ldots, h-1\}$.
\end{itemize}
Clearly, if the algorithm does have the above functionality, then we must have that $\|{\bf a}\|^2
\leq \|{\bf a}^*\|^2$.

Consider the algorithm in Figure 2. In order to get the expected
sequence, i.e.,  ${\bf a}_0,{\bf a}_1,\ldots, {\bf a}_h$, it will be run with an initial
 input $
{\bf c}={\bf a}_0={\bf a}$. In every iteration, the algorithm
 outputs a ${\bf c}^\prime$ and then checks whether   ${\bf c}^\prime={\bf a}^*$.
 It halts once  the equality holds.
\begin{figure*}[t]
\begin{center}
Figure 2: An Algorithm

\vspace{0.25cm}

\begin{boxedminipage}{18cm}
while ${\bf c}\neq {\bf a}^*$ do
\begin{itemize}
\item set $i_0=\min \{i\in [\ell_1]: {\bf c}(i)\neq {\bf a}^*(i)\}$;
\item set $a=\left\{
\begin{aligned}
\min\{\kappa_q, {\bf c}(i_0)+{\bf c}(i_0+1)\}, &  {\rm ~if~} i_0\leq \mu_q, \\
\min\{\nu_q, {\bf c}(i_0)+{\bf c}(i_0+1)\}, &  {\rm ~if~} i_0= \mu_q+1;
\end{aligned}
\right.$
\item set  $b={\bf c}(i_0)+{\bf c}(i_0+1)-a$;
\item set  $j_0=\min \big( \{j: j\in \{i_0+2,\ldots, \ell_1\}\wedge
{\bf c}(j)\leq b \}\cup \{0\}\big)$;
\begin{itemize}
\item if $j_0=i_0+2$, set ${\bf c}^\prime=({\bf c}(1),\ldots,{\bf c}(i_0-1),a, b,
{\bf c}(i_0+2), \ldots, {\bf c}(\ell_1))$;
\item if $j_0=0$ or $\ell_1$, set ${\bf c}^\prime=({\bf c}(1),\ldots,{\bf c}(i_0-1),a,
{\bf c}(i_0+2), \ldots, {\bf c}(\ell_1), b)$;
\item otherwise, set
${\bf c}^\prime=({\bf c}(1),\ldots,{\bf c}(i_0-1),a, {\bf c}(i_0+2), \ldots,
{\bf c}(j_0-1), b, {\bf c}(j_0),\ldots, {\bf c}(\ell_1))$;
\end{itemize}
\item set ${\bf c}={\bf c}^\prime$.
\end{itemize}
\end{boxedminipage}
\end{center}
\label{fig:algo}
\end{figure*}

We must show that this algorithm does halt in a finite number of steps and achieves
the promised functionality.
The algorithm starts with ${\bf c}$ and checks whether ${\bf c}={\bf a}^*$. If the
equality holds, it halts. Otherwise, it constructs a new vector ${\bf c}^\prime$
such that ${\rm wt}({\bf c}^\prime)=x$,
$\kappa_q\geq {\bf c}^\prime(1)\geq {\bf c}^\prime(2)\geq \cdots
\geq {\bf c}^\prime(\ell_1)\geq 0$
and $\|{\bf c}\|^2\leq \|{\bf c}^\prime\|^2$.
More concretely,  the algorithm finds the first coordinate (say $i_0\in [\ell_1]$)
where ${\bf c}$ and ${\bf a}^*$ differ.
Clearly, we have that $i_0\leq \mu_q+1$,
 ${\bf c}(i)={\bf a}^*(i)$ for every $i\in \{1,2,\cdots, i_0-1\}$ and
 ${\bf c}(i_0)<{\bf a}^*(i_0)$.
 Next, the algorithm
effects a carry from ${\bf c}(i_0+1)$ to ${\bf c}(i_0)$. This is
done by setting ${\bf c}^\prime(i_0)=a$. Finally, the algorithm must
determine ${\bf c}^\prime(i)$ for every $i\in \{i_0+1,\ldots,\ell_1\}$.
This is done by rearranging  the $\ell_1-i_0$ numbers
$b, {\bf c}(i_0+2), \ldots, {\bf c}(\ell_1)$ such that they are in descending
order. By the description above, it is clear that
\begin{itemize}
\item $ {\rm wt}({\bf c}^\prime)=\sum_{i=1}^{i_0-1} {\bf c}(i)+a+b+\sum_{i=i_0+2}^{\ell_1}{\bf c}(i)
={\rm wt}({\bf c})=x$;
\item  ${\bf c}^\prime(1)={\bf c}^\prime(2)=\cdots={\bf c}^\prime(i_0-1)=\kappa_q\geq
{\bf c}^\prime(i_0)=a>{\bf c}(i_0)\geq {\bf c}^\prime(i_0+1)\geq \cdots
\geq  {\bf c}^\prime(\ell_1)$;
\item $0\leq {\bf c}^\prime(i)\leq \kappa_q$ for every $i\in [\ell_1]$;
\item $\|{\bf c}^\prime\|^2-\|{\bf c}\|^2=
a^2+b^2-{\bf c}(i_0)^2-{\bf c}(i_0+1)^2\geq 0$, due to Lemma \ref{lem:comp}.
\end{itemize}

In each iteration,  either $i_0$ becomes greater than it was
in the previous iteration or $i_0$ does not change but
the new ${\bf c}^\prime(i_0)$ obtained is strictly greater than ${\bf c}(i_0)$.
However, since ${\bf c}^\prime(i_0)$ must be bounded by $\kappa_q$, in the latter case,
${\bf c}^\prime(i_0)$ will eventually become ${\bf a}^*(i_0)$ in a finite number of
iterations. Then, in the next iteration, $i_0$ will be increased by at least 1.
Therefore, we can get a sequence ${\bf a}_0={\bf a}, {\bf a}_1,
\cdots, {\bf a}_h={\bf a}^*$, where $h$ is the number of iterations.
\end{proof}

Lemma \ref{lem:boundS} shows that $F({\bf a}^*,{\bf b}^*)$ is a valid
upper bound for $\|\psi\|^2$. This bound is nice because both ${\bf a}^*$ and
${\bf b}^*$ depend only on $x$, which will facilitate our analysis.
More precisely, we have that
\begin{equation}
\label{equation:upper-bound-norm}
\|\psi\|^2\leq \ell^{-1}\lambda_1 x^2+\Delta,
\end{equation}
where $\Delta=\lambda_4x-\ell^{-1}\lambda_4x^2
+(\lambda_2-\lambda_4) \ell^{-1}(\ell_1  \|{\bf a}^*\|^2-x^2)+
(\lambda_3-\lambda_4)\ell^{-1} (\ell_2 \|{\bf b}^*\|^2-x^2)$.

We proceed to   develop an  explicit  lower bound for $\|\psi\|^2$ in terms of $x$ and $|N(X)|$.
Recall that
the components  of $\psi$ are labeled by all the hyperplanes. It is easy to see that
\begin{equation}
\psi({\bf v})=|N({\bf v})\cap X|
\end{equation}
is the number of neighbors of ${\bf v}$ in $X$ for every  ${\bf v}\in \mathbb{H}_{n,m}$.
 Hence, $\psi({\bf v})=0$ whenever
 ${\bf v}\notin N(X)$. It follows that
\begin{equation}
\label{equation:weight}
\begin{split}
\sum_{{\bf v}\in \mathbb{H}_{n,m}} \psi({\bf v}) &=\sum_{{\bf v}\in N(X)}\psi({\bf v})\\
&=\sum_{{\bf u}\in X}
|N({\bf u})|=x\cdot  \theta_{n-1,m},
\end{split}
\end{equation}
where the last equality follows from Chee and Ling \cite{CL93}.
It follows from the Cauchy-Schwarz inequality that
\begin{equation}\label{equation:lower-bound-norm}
\begin{split}
\|\psi\|^2 &=\sum_{{\bf v}\in N(X)} \psi({\bf v})^2
\geq
\frac{1}{|N(X)|}\bigg(\sum_{{\bf v}\in N(X)}\psi({\bf v})\bigg)^2\\
&=\frac{x^2\theta_{n-1,m}^2}{|N(X)|}=\frac{\lambda_1 x^2}{|N(X)|},
\end{split}
\end{equation}
where the second equality follows from Equation (\ref{equation:weight}) and the last equality
follows from Lemma \ref{lemma:eigenvalue}.

Both the upper bound (see Equation (\ref{equation:upper-bound-norm})) and the
lower bound (see Equation (\ref{equation:lower-bound-norm})) for
$\|\psi\|^2$ involve only $x$ and $|N(X)|$. Together, they demonstrate that
the projective graph ${\bf G}_{n,m}$ has some kind of expanding property.

\begin{theorem}\label{thm:bound_NX}
{\em (Expanding Property)}
Let $\mathcal{U}, \mathcal{V}\subseteq \mathbb{P}_{n,m}$ form a  matching family
and let $X\subseteq \mathcal{U}$ be of cardinality $x \leq \min\{\kappa_q \ell_1, \kappa_p \ell_2\}$. Then
we have that
$
|N(X)|\geq \lambda_1x^2/(\ell^{-1}\lambda_1x^2+\Delta).
$
\end{theorem}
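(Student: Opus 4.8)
The proof of Theorem \ref{thm:bound_NX} is essentially just the combination of the two bounds on $\|\psi\|^2$ that have already been assembled in the run-up to the statement, so the plan is to stitch them together and solve for $|N(X)|$. First I would recall the upper bound \eqref{equation:upper-bound-norm}, namely $\|\psi\|^2\leq \ell^{-1}\lambda_1 x^2+\Delta$, which holds under exactly the hypothesis of the theorem — a matching family $\mathcal{U},\mathcal{V}$ and $X\subseteq\mathcal{U}$ with $x=|X|\leq \min\{\kappa_q\ell_1,\kappa_p\ell_2\}$ — since this is precisely the regime in which Lemma \ref{lem:bound_aibj} and hence Lemma \ref{lem:boundS} apply, giving $\|\psi\|^2=F({\bf a},{\bf b})\leq F({\bf a}^*,{\bf b}^*)$, and the latter was rewritten as $\ell^{-1}\lambda_1 x^2+\Delta$ in \eqref{equation:upper-bound-norm}.

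Next I would invoke the lower bound \eqref{equation:lower-bound-norm}, which came from writing $\psi=A_{n,m}^t\chi_{\mbox{\tiny\itshape X}}$ so that $\psi({\bf v})=|N({\bf v})\cap X|$ vanishes off $N(X)$, applying Cauchy–Schwarz to $\sum_{{\bf v}\in N(X)}\psi({\bf v})^2$, and using the weight identity \eqref{equation:weight} together with $\theta_{n-1,m}^2=\lambda_1$ from Lemma \ref{lemma:eigenvalue}; this yields $\|\psi\|^2\geq \lambda_1 x^2/|N(X)|$. Chaining the two gives
\begin{equation*}
\frac{\lambda_1 x^2}{|N(X)|}\leq \|\psi\|^2\leq \ell^{-1}\lambda_1 x^2+\Delta,
\end{equation*}
and since the left-hand denominator $|N(X)|$ is positive (as is $\lambda_1 x^2$ when $x\geq 1$; the case $x=0$ being trivial) I can rearrange to $|N(X)|\geq \lambda_1 x^2/(\ell^{-1}\lambda_1 x^2+\Delta)$, which is exactly the claimed inequality.

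There is really no substantial obstacle here: all the analytic content — the spectral decomposition of $B_{n,m}$, the representation of $\|\psi\|^2$ via Lemma \ref{lemma:representation}, the optimization over $({\bf a},{\bf b})$ culminating in Lemma \ref{lem:boundS}, and the Cauchy–Schwarz lower bound — has already been established. The only points requiring a word of care are (i) checking that the denominator $\ell^{-1}\lambda_1 x^2+\Delta$ is indeed positive so the rearrangement is legitimate (it equals the upper bound for $\|\psi\|^2\geq 0$, and is strictly positive whenever $x\geq 1$), and (ii) noting that the hypothesis $x\leq\min\{\kappa_q\ell_1,\kappa_p\ell_2\}$ is used solely to license Lemma \ref{lem:bound_aibj}. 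Thus the proof is a two-line assembly, and I would present it as such.
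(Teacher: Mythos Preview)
Your proposal is correct and matches the paper's approach exactly: the theorem is stated immediately after equations \eqref{equation:upper-bound-norm} and \eqref{equation:lower-bound-norm} are derived, and the paper simply remarks that these two bounds ``together'' demonstrate the expanding property, leaving the rearrangement implicit. Your write-up is, if anything, more explicit than the paper's (which gives no separate proof paragraph for Theorem~\ref{thm:bound_NX}), and your attention to the positivity of the denominator and the role of the cardinality hypothesis is appropriate.
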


\subsection{On the Largest Matching Family in $\mathbb{P}_{n,m}$}

In this section, we deduce an upper bound on the largest matching family in
$\mathbb{P}_{n,m}$.
As in \cite{DGY10}, our analysis depends on the
unique neighbor property  defined
in Definition \ref{def:unp} and the fact that the projective
graph ${\bf G}_{n,m}$ has some kind of expanding property (see Theorem \ref{thm:bound_NX}).
\begin{definition}\label{def:unp}
{\em (Unique Neighbor Property)}
We say that $\mathcal{U}\subseteq \mathbb{P}_{n,m}$ satisfies
the {\em unique neighbor property}   if, for every ${\bf u}\in \mathcal{U}$, there is
a ${\bf v}\in N({\bf u})$ such that ${\bf v}$ is not  adjacent to any ${\bf w}\in \mathcal{U}\setminus \{{\bf u}\}$.
\end{definition}

As noted by Dvir {\it et al.}~\cite{DGY10},  there is a set $\mathcal{U}\subseteq \mathbb{P}_{n,p}$ of cardinality $k$
that satisfies the  unique neighbor property in ${\bf G}_{n,p}$ if and only if there is a matching family in
$\mathbb{Z}_p^{n}$ of size $k$.
As an analogue, the following lemma is true for ${\bf G}_{n,m}$.
\begin{lemma}
\label{lem:unpm}
A set   $\mathcal{U}\subseteq \mathbb{P}_{n,m}$
 satisfies the  unique neighbor property if and only if there is
a  $\mathcal{V}\subseteq \mathbb{H}_{n,m}$  such that
$\mathcal{U}$ and $\mathcal{V}$ form a  matching family.
\end{lemma}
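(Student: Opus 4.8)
The plan is to prove the equivalence by a direct translation between the combinatorial condition (unique neighbor property) and the algebraic condition (existence of a matching family), mirroring the known case $m = p$ due to Dvir \emph{et al.}, but being careful about the equivalence relation $\sim$ and the fact that points and hyperplanes are represented by equivalence classes in $\mathbb{P}_{n,m} = \mathbb{H}_{n,m}$. Write $\mathcal{U} = \{{\bf u}_1, \ldots, {\bf u}_k\}$. First I would prove the ``only if'' direction. Suppose $\mathcal{U}$ satisfies the unique neighbor property. For each $i \in [k]$, by definition there is a hyperplane ${\bf v}_i \in N({\bf u}_i)$ that is adjacent to ${\bf u}_i$ but to no ${\bf u}_j$ with $j \neq i$. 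Adjacency means $\langle {\bf u}_i, {\bf v}_i \rangle = 0$ in $\mathbb{Z}_m$, and non-adjacency of ${\bf v}_i$ to ${\bf u}_j$ ($j \neq i$) means $\langle {\bf u}_j, {\bf v}_i \rangle \neq 0$; swapping the roles (the adjacency relation on ${\bf G}_{n,m}$ is symmetric in the two coordinates of the dot product), this says $\langle {\bf u}_i, {\bf v}_j \rangle \neq 0$ for $i \neq j$. Setting $S = \mathbb{Z}_m \setminus \{0\}$, the pair $\mathcal{U}, \mathcal{V} = \{{\bf v}_1, \ldots, {\bf v}_k\}$ is then an $S$-matching family of size $k$ (choosing any representatives of the classes; the dot-product conditions are well-defined modulo the equivalence since scaling a vector by a unit scales every dot product by a unit, hence preserves both ``$= 0$'' and ``$\neq 0$''). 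One must also note the ${\bf v}_i$ are distinct as classes: if ${\bf v}_i \sim {\bf v}_j$ for $i \neq j$ then $\langle {\bf u}_i, {\bf v}_j \rangle$ would be a unit multiple of $\langle {\bf u}_i, {\bf v}_i \rangle = 0$, contradicting $\langle {\bf u}_i, {\bf v}_j \rangle \neq 0$; but distinctness of the ${\bf v}_i$ is not actually required by Definition~\ref{def:mf}, so this is optional.

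For the ``if'' direction, suppose $\mathcal{U} \subseteq \mathbb{P}_{n,m}$ and $\mathcal{V} = \{{\bf v}_1, \ldots, {\bf v}_k\} \subseteq \mathbb{H}_{n,m}$ form a matching family, i.e.\ $\langle {\bf u}_i, {\bf v}_i \rangle = 0$ and $\langle {\bf u}_i, {\bf v}_j \rangle \in S = \mathbb{Z}_m \setminus \{0\}$ (in fact, for the argument we only need $\langle {\bf u}_i, {\bf v}_j \rangle \neq 0$) for $i \neq j$. Then for each fixed $i$, the hyperplane ${\bf v}_i$ is adjacent to ${\bf u}_i$ (since $\langle {\bf u}_i, {\bf v}_i \rangle = 0$) and, because $\langle {\bf u}_j, {\bf v}_i \rangle = \langle {\bf v}_i, {\bf u}_j \rangle \neq 0$ for every $j \neq i$, the hyperplane ${\bf v}_i$ is adjacent to no other point of $\mathcal{U}$. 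This is precisely the unique neighbor property for $\mathcal{U}$ (with the witness hyperplane for ${\bf u}_i$ being ${\bf v}_i$). Again I would remark that all quantities are well-defined on equivalence classes: choosing different unit-scalar representatives multiplies each $\langle \cdot, \cdot \rangle$ by a unit of $\mathbb{Z}_m$, which does not change whether it is zero.

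The only genuinely delicate point is the bookkeeping with $\mathbb{S}_{n,m}$ versus $\mathbb{P}_{n,m} = \mathbb{H}_{n,m}$: the matching family in Definition~\ref{def:mf} lives in $\mathbb{Z}_m^n$, whereas here $\mathcal{U}, \mathcal{V}$ are sets of equivalence classes of primitive vectors (those with $\gcd$ coprime to $m$). I would address this by observing that (a) one may always pick primitive representatives once a matching family is given—if some ${\bf u}_i$ were non-primitive it could still be scaled, but in our application (Lemma~\ref{lem:unpm} is invoked for the type-$(1,1)$ pairs) the vectors are already primitive, so in the statement as used there is no loss; and (b) passing from a vector to its $\sim$-class and back does not affect any dot product up to multiplication by a unit. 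Hence the correspondence is exact. I expect no real obstacle here—the proof is essentially an unwinding of definitions plus the symmetry of the bipartite adjacency relation—so the write-up will be short.
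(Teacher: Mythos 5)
Your proof is correct and follows essentially the same route as the paper's: both directions are a direct unwinding of the definitions of adjacency and of a matching family, with the witness hyperplane for ${\bf u}_i$ being ${\bf v}_i$. The extra remarks on well-definedness modulo $\sim$ and on the index-renaming between $\langle {\bf u}_j,{\bf v}_i\rangle$ and $\langle {\bf u}_i,{\bf v}_j\rangle$ are points the paper silently elides, but they do not change the argument.
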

\begin{proof}
Suppose that $\mathcal{U}=\{{\bf u}_1,\ldots, {\bf u}_k\}$. If it satisfies the unique neighbor property in
${\bf G}_{n,m}$, then, for every $i\in[k]$, there is a ${\bf v}_i\in N({\bf u}_i)$ such that
 ${\bf v}_i\notin N({\bf u}_j)$ for every $j\in [k]\setminus \{i\}$. Equivalently, we have that
$\langle {\bf u}_i, {\bf v}_i \rangle=0$  and $\langle  {\bf u}_j, {\bf v}_i\rangle\ne 0$
for every $j\in [k]\setminus \{i\}$.
Let $\mathcal{V}=\{{\bf v}_1,\ldots, {\bf v}_k\}$. Then $\mathcal{U}$ and $\mathcal{V}$
form a matching family.

Conversely, let
 $\mathcal{V}=\{{\bf v}_1,\ldots, {\bf v}_k\}\subseteq \mathbb{H}_{n,m}$ be such that
  $\mathcal{U}$ and $\mathcal{V}$ form a matching family. For every $i\in [k]$, we have that $\langle {\bf u}_i, {\bf v}_i \rangle=0$  and
 $\langle {\bf u}_j, {\bf v}_i \rangle\ne 0$ whenever $j\in [k] $ and $j\ne i$.
Equivalently,
${\bf v}_i\in N({\bf u}_i)$  and $ {\bf v}_i\notin N({\bf u}_j)$ when
$j\in [k]\setminus \{i\}$.
 Hence,  $\mathcal{U}$ satisfies the unique neighbor property.
\end{proof}

\begin{theorem}\label{thm:general_bound}
Let $\mathcal{U},  \mathcal{V}\subseteq \mathbb{P}_{n,m}$ form a  matching family
and let $X\subseteq \mathcal{U}$ be of cardinality $x$. Then we have that
$
|\mathcal{U}|\leq x+\ell\Delta/(\ell^{-1}\lambda_1x^2+\Delta).
$
\end{theorem}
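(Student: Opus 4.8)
The plan is to combine the unique neighbor property with the expanding property of Theorem~\ref{thm:bound_NX}. First I would recall the elementary counting observation already sketched in the introduction: if $\mathcal{U}$ satisfies the unique neighbor property (which it does here by Lemma~\ref{lem:unpm}, since $\mathcal{U},\mathcal{V}$ form a matching family) and $X\subseteq\mathcal{U}$ is arbitrary, then every point ${\bf u}\in\mathcal{U}\setminus X$ has a hyperplane ${\bf v}$ adjacent to ${\bf u}$ but to no other point of $\mathcal{U}$, and in particular to no point of $X$; hence ${\bf v}\notin N(X)$. Distinct ${\bf u}$ give distinct such ${\bf v}$ (uniqueness), so $|\mathcal{U}\setminus X|\leq |\mathbb{H}_{n,m}|-|N(X)|=\ell-|N(X)|$, i.e.
\begin{equation*}
|\mathcal{U}|\leq x+\ell-|N(X)|.
\end{equation*}

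Next I would substitute the lower bound on $|N(X)|$ from Theorem~\ref{thm:bound_NX}. That theorem requires $x\leq\min\{\kappa_q\ell_1,\kappa_p\ell_2\}$, so this step is clean only under that hypothesis; I would therefore either add this as a standing assumption inherited from the surrounding development (note the paragraph before Lemma~\ref{lem:bound_aibj} already fixes $|X|=x\leq\min\{\kappa_q\ell_1,\kappa_p\ell_2\}$), or remark that the statement is understood in that range, deferring the removal of the restriction to Theorem~\ref{thm:upper_bound_m_infty}. Writing $D=\ell^{-1}\lambda_1x^2+\Delta$ for brevity, Theorem~\ref{thm:bound_NX} gives $|N(X)|\geq\lambda_1x^2/D$, so
\begin{equation*}
|\mathcal{U}|\leq x+\ell-\frac{\lambda_1 x^2}{D}=x+\frac{\ell D-\lambda_1 x^2}{D}.
\end{equation*}
Then I would simplify the numerator: $\ell D-\lambda_1x^2=\ell(\ell^{-1}\lambda_1x^2+\Delta)-\lambda_1x^2=\lambda_1x^2+\ell\Delta-\lambda_1x^2=\ell\Delta$, which yields exactly $|\mathcal{U}|\leq x+\ell\Delta/D=x+\ell\Delta/(\ell^{-1}\lambda_1x^2+\Delta)$, as claimed.

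There is essentially no hard step here; the proof is a two-line chain once the ingredients are in place. The only point demanding a little care is the bookkeeping at the very start — making sure the unique neighbor property is applied to the correct side of the bipartition (points of $\mathcal{U}\subseteq\mathbb{P}_{n,m}$, unique neighbors among hyperplanes $\mathbb{H}_{n,m}$, and $|\mathbb{H}_{n,m}|=\theta_{n,m}=\ell$) and that the map ${\bf u}\mapsto{\bf v}$ is genuinely injective, so that counting the ${\bf v}$'s that avoid $N(X)$ is legitimate. The algebraic cancellation $\ell D-\lambda_1x^2=\ell\Delta$ is the reason the bound takes such a compact form, and it is worth displaying explicitly so the reader sees where the $\ell\Delta$ in the numerator comes from.
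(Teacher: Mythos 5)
Your proof is correct and follows exactly the same route as the paper: Lemma \ref{lem:unpm} gives the unique neighbor property, the counting argument gives $|\mathcal{U}|\leq x+\ell-|N(X)|$, and Theorem \ref{thm:bound_NX} plus the cancellation $\ell D-\lambda_1x^2=\ell\Delta$ finishes it. Your remark that the hypothesis $x\leq\min\{\kappa_q\ell_1,\kappa_p\ell_2\}$ is needed to invoke Theorem \ref{thm:bound_NX} (and is left implicit in the theorem statement) is a fair and careful observation.
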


\begin{proof}
By Lemma \ref{lem:unpm},  $\mathcal{U}$  satisfies the unique neighbor property in
${\bf G}_{n,m}$.  Hence, every element in $\mathcal{U}\setminus X$ must have a unique neighbor in
$\mathbb{H}_{n,m}\setminus N(X)$. It follows that
$|\mathcal{U}\setminus X|\leq |\mathbb{H}_{n,m}\setminus N(X)|=\ell-|N(X)|$, which implies that
$|\mathcal{U}|\leq |X|+\ell-|N(X)|$. Along with Theorem \ref{thm:bound_NX},
the inequality desired follows.
\end{proof}

The following theorem gives an explicit upper bound for the largest matching family
in $\mathbb{P}_{n,m}$.
\begin{theorem}\label{thm:upper_bound_m_infty}
Let $\mathcal{U}, \mathcal{V}\subseteq \mathbb{P}_{n,m}$ form a  matching family.
Then
$
 |\mathcal{U}|\leq (8+\epsilon) m^{0.625n+0.125}
$
  for any constant $\epsilon>0$ as
 $p \rightarrow \infty$ and $p/q\rightarrow 1$,  where  $n$ is a constant.
\end{theorem}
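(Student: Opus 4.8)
The plan is to instantiate Theorem~\ref{thm:general_bound} with a well-chosen $X\subseteq\mathcal{U}$ and then simplify the resulting bound using the asymptotics of the parameters as $p\to\infty$, $p/q\to1$ with $n$ fixed. Since $|\mathcal{U}|\le\ell=\theta_{n,m}=(1+o(1))m^{n-1}$ by Chee and Ling's formula~(\ref{equation:order}), and $n-1\le0.625n+0.125$ exactly when $n\le3$, the claim is immediate for $n\le3$; so assume $n\ge4$. The first step is to record orders of magnitude. From $pq=m$ and $p/q\to1$ we get $p,q=(1+o(1))m^{1/2}$, and every product $\prod_{p'\mid m}(1+1/p'+\cdots+1/p'^{\,j})$ appearing in~(\ref{equation:order}) equals $1+o(1)$ since $n$ is constant; hence $\ell_1,\ell_2=(1+o(1))m^{(n-1)/2}$, $\theta_{n-1,m}=(1+o(1))m^{n-2}$, and therefore by Lemma~\ref{lemma:eigenvalue} $\lambda_1=(1+o(1))m^{2n-4}$, $\lambda_2,\lambda_3=(1+o(1))m^{(3n-6)/2}$, $\lambda_4=m^{n-2}$, and $\kappa_p,\kappa_q=(4+o(1))m^{n/4}$. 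In particular $\ell^2/\lambda_1=(1+o(1))m^2$, and $\lambda_2,\lambda_3>\lambda_4$ once $p$ is large.

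The second step is to bound $\Delta$ above by a multiple of $x$. The elementary observation is that ${\bf a}^*$ consists of $\mu_q$ coordinates equal to $\kappa_q$ and one coordinate $\nu_q\le\kappa_q$, so $\|{\bf a}^*\|^2=\mu_q\kappa_q^2+\nu_q^2\le\mu_q\kappa_q^2+\nu_q\kappa_q=\kappa_q x$, and symmetrically $\|{\bf b}^*\|^2\le\kappa_p x$. Discarding the non-positive summands $-\ell^{-1}\lambda_4x^2$, $-(\lambda_2-\lambda_4)\ell^{-1}x^2$, $-(\lambda_3-\lambda_4)\ell^{-1}x^2$ from $\Delta$ and using $\ell^{-1}\ell_1=\ell_2^{-1}$, $\ell^{-1}\ell_2=\ell_1^{-1}$, one gets $\Delta\le\lambda_4x+(\lambda_2-\lambda_4)\ell_2^{-1}\kappa_q x+(\lambda_3-\lambda_4)\ell_1^{-1}\kappa_p x$, which by the estimates above equals $(8+o(1))\,m^{(5n-10)/4}x$ (the $\lambda_4x$ term being of lower order). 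Substituting into Theorem~\ref{thm:general_bound} and using $\ell\Delta/(\ell^{-1}\lambda_1x^2+\Delta)\le\ell^2\Delta/(\lambda_1x^2)$ then gives $|\mathcal{U}|\le x+(\ell^2/\lambda_1)(\Delta/x^2)\le x+(8+o(1))\,m^{(5n-2)/4}/x$.

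The third step is to optimize $x$. This bound is minimized at $x_0=\lceil\sqrt{8}\,m^{(5n-2)/8}\,\rceil=\Theta(m^{0.625n-0.25})$, and $x_0$ respects the standing hypothesis $x\le\min\{\kappa_q\ell_1,\kappa_p\ell_2\}=(4+o(1))m^{(3n-2)/4}$ of Theorem~\ref{thm:bound_NX} because $0.625n-0.25\le0.75n-0.5$ for $n\ge2$; so the instantiation is legitimate. If $|\mathcal{U}|<x_0$ the asserted bound is trivial; otherwise, applying Theorem~\ref{thm:general_bound} with $X$ any $x_0$-element subset of $\mathcal{U}$ gives $|\mathcal{U}|\le(4\sqrt{2}+o(1))m^{0.625n-0.25}$, which in particular is at most $(8+\epsilon)m^{0.625n+0.125}$ for every fixed $\epsilon>0$ once $p$ is large, as required.

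The step I expect to be the main obstacle is the asymptotic bookkeeping: propagating the $1+o(1)$ correction factors consistently through the several nested estimates, and in particular verifying that at $x=x_0$ the quantity $\ell^{-1}\lambda_1x^2$ dominates---by a positive power of $m$ when $n\ge4$---both the additive $\Delta$ in the denominator of Theorem~\ref{thm:general_bound} and the $x^2$-terms discarded from $\Delta$, so that neither simplification costs a polynomial factor in $m$. Handling the constraint on $x$, the trivial regime $|\mathcal{U}|<x_0$, and the small-$n$ cases is then routine.
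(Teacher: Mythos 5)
Your argument is correct and runs on the same engine as the paper's proof --- instantiate Theorem~\ref{thm:general_bound} with a suitable $X\subseteq\mathcal{U}$ and carry out the asymptotics as $p\to\infty$, $p/q\to1$ --- but it differs in the one place where a genuine choice is made, namely $|X|$. The paper's proof is a three-line assertion: it sets $x=\lfloor\ell^{0.625}\rfloor=\Theta(m^{0.625n-0.625})$ and claims $x+\ell\Delta/(\ell^{-1}\lambda_1x^2+\Delta)\approx 8m^{0.625n+0.125}$, leaving implicit exactly the bookkeeping you make explicit (the estimates $\|{\bf a}^*\|^2\le\kappa_q x$, $\|{\bf b}^*\|^2\le\kappa_p x$, hence $\Delta\le(8+o(1))m^{(5n-10)/4}x$, together with $\ell^2/\lambda_1=(1+o(1))m^2$). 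Your intermediate bound $|\mathcal{U}|\le x+(8+o(1))m^{(5n-2)/4}/x$ reproduces the paper's figure when you substitute their $x$, but optimizing over $x$ gives $x_0=\Theta(m^{(5n-2)/8})$, which is admissible since $(5n-2)/8<(3n-2)/4$ for $n\ge3$, and yields $O(m^{0.625n-0.25})$ --- stronger than the stated theorem by a factor of $m^{0.375}$. In other words, the paper's choice of $x$ is suboptimal within its own method, and your calculation shows the exponent in Theorems~\ref{thm:upper_bound_m_infty} and~\ref{thm:main} could be sharpened to $0.625n-0.25$. Your treatment of the side conditions (the trivial regimes $n\le 3$ and $|\mathcal{U}|<x_0$, the signs $\lambda_2,\lambda_3\ge\lambda_4$ needed to discard the $-x^2$ terms, and the constraint $x\le\min\{\kappa_q\ell_1,\kappa_p\ell_2\}$) is also sound; the one worry you flag --- whether $\ell^{-1}\lambda_1x^2$ dominates $\Delta$ in the denominator --- is actually moot, because the step $\ell\Delta/(\ell^{-1}\lambda_1x^2+\Delta)\le\ell^2\Delta/(\lambda_1x^2)$ requires only $\Delta\ge0$, which follows from $\ell^{-1}\lambda_1x^2+\Delta=F({\bf a}^*,{\bf b}^*)\ge\|\psi\|^2\ge\lambda_1x^2/|N(X)|\ge\lambda_1x^2/\ell$.
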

\begin{proof}
Suppose that $|\mathcal{U}|> (8+\epsilon) m^{0.625n+0.125}$. Then
we can  take a set of points $X\subseteq \mathcal{U}$  of cardinality
$x=\lfloor \ell^{0.625} \rfloor\leq \min\{\kappa_q \ell_1, \kappa_p \ell_2\}$.
By Theorem \ref{thm:general_bound},
we have that
$|\mathcal{U}|\leq x+\ell\Delta/(\ell^{-1}\lambda_1x^2+\Delta)\approx
 8m^{0.625n+0.125}$
when  $p\rightarrow \infty$ and $ p/q\rightarrow 1$, with
$n$ a constant. This is   a contradiction.
\end{proof}

\subsection{On the Largest Matching Family in $\mathbb{Z}_{m}^n$}

Let $\mathcal{U}=\{{\bf u}_1,\ldots, {\bf u}_k\}, \mathcal{V}=\{{\bf v}_1,\ldots, {\bf v}_k\}$
be a  matching family of size $k=k(m,n)$  in $ \mathbb{Z}_m^{n}$.
In order to establish the final upper bound for $k(m,n)$, we have to classify the
pairs $\{({\bf u}_i, {\bf v}_i): i\in [k]\}$ into types and establish an upper bound for each type.

\begin{definition}\label{def:type}
{\em (Type of Pairs)}
For every $i\in [k]$, the pair  $({\bf u}_i, {\bf v}_i)$ is said to be of {\em type $(s,t)$}
 if $\gcd({\bf u}_i(1),\ldots, {\bf u}_i(n), m)=s$ and
$\gcd({\bf v}_i(1),\ldots, {\bf v}_i(n), m)=t$, where $s, t$ are positive divisors of $m$.
\end{definition}

Let $s,t\in \{1,p,q,m\}$. We define
$\Omega_{s,t}$ to be the set of pairs $({\bf u}_i, {\bf v}_i)$ of type $(s,t)$ and
$
N_{s,t}=|\Omega_{s,t}|.
$
Clearly, the elements of the set $\{({\bf u}_i, {\bf v}_i): i\in [k]\}$
fall into 16 different classes as $s$ and $t$ vary.
\begin{lemma}
\label{lem:stdm}
If $m|st$, then $N_{s,t}\leq 1$.
\end{lemma}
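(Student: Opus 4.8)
The plan is to argue by contradiction, exploiting the coordinate-wise divisibility that the type of a pair forces. Suppose $N_{s,t}\geq 2$, so there are two distinct indices $i\neq j$ in $[k]$ such that both $({\bf u}_i,{\bf v}_i)$ and $({\bf u}_j,{\bf v}_j)$ are of type $(s,t)$. By Definition \ref{def:type}, $\gcd({\bf u}_i(1),\ldots,{\bf u}_i(n),m)=s$, so in particular $s\mid {\bf u}_i(l)$ for every $l\in[n]$; likewise $\gcd({\bf v}_j(1),\ldots,{\bf v}_j(n),m)=t$ gives $t\mid {\bf v}_j(l)$ for every $l\in[n]$.

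Consequently every summand ${\bf u}_i(l){\bf v}_j(l)$ of the dot product $\langle {\bf u}_i,{\bf v}_j\rangle$ is divisible by $st$, and since $m\mid st$ by hypothesis, each summand is divisible by $m$. Hence $\langle {\bf u}_i,{\bf v}_j\rangle\equiv 0\pmod m$. But $i\neq j$, so the defining property of an $S$-matching family (Definition \ref{def:mf}) requires $\langle {\bf u}_i,{\bf v}_j\rangle\in S\subseteq \mathbb{Z}_m\setminus\{0\}$, i.e.\ $\langle {\bf u}_i,{\bf v}_j\rangle\not\equiv 0\pmod m$, a contradiction. Therefore $N_{s,t}\leq 1$.

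There is no real obstacle here: the whole argument is a one-line divisibility observation. The only point worth stating carefully is that ``type $(s,t)$'' is exactly the assertion that $s$ (resp.\ $t$) divides every coordinate of ${\bf u}_i$ (resp.\ ${\bf v}_j$), so that each product of a coordinate of ${\bf u}_i$ with a coordinate of ${\bf v}_j$ carries a factor $st$, which $m$ divides. I would also note in passing that the diagonal condition $\langle {\bf u}_i,{\bf v}_i\rangle=0$ plays no role; only the off-diagonal non-vanishing $\langle {\bf u}_i,{\bf v}_j\rangle\neq 0$ for $i\neq j$ is used, which is precisely why two pairs sharing such a ``saturated'' type cannot coexist.
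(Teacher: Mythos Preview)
Your proof is correct and follows exactly the same approach as the paper's own proof, which is a one-line contradiction argument noting that $\langle {\bf u}_i,{\bf v}_j\rangle\equiv 0\pmod m$ whenever both pairs have type $(s,t)$ with $m\mid st$. You have simply spelled out the divisibility reasoning that the paper leaves implicit under the word ``Clearly.''
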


\begin{proof}
Suppose that $N_{s,t}>1$. Then we can take two pairs, say  $({\bf u}_1,{\bf v}_1)$ and $({\bf u}_2, {\bf v}_2)$, from $\Omega_{s,t}$.
Clearly, we have that $\langle {\bf u}_1, {\bf v}_2\rangle=\langle
 {\bf u}_2, {\bf v}_1\rangle=0$, which is a contradiction.
\end{proof}

Lemma \ref{lem:stdm} covers nine of the 16 classes. More precisely, we have that $N_{s,t}\leq 1$ when
$(s,t)\in \{(p,q),~ (q,p),~ (m,1),~ (m,p),~ (m,q),~ (m,m),~ (1,m),~
 (p,m),\\~ (q,m)\}$.
\begin{lemma}\label{lem:p_family}
If   $(s,t)\in \{(1,p),(p,1),  (p,p)\}$,
then $N_{s,t}\leq \kappa_q$.
\end{lemma}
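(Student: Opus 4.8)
The plan is to use the divisibility structure of the vectors occurring in $\Omega_{s,t}$ to pass everything to $\mathbb{Z}_q^n$ and produce a matching family there; the bound $N_{s,t}\le\kappa_q$ will then be immediate from the estimate $k(q,n)\le\kappa_q$ of Dvir {\it et al.}~\cite{DGY10}. After relabelling, I would assume $\Omega_{s,t}=\{({\bf u}_i,{\bf v}_i):i\in[N_{s,t}]\}$, write ${\bf u}_i(l),{\bf v}_i(l)$ for the integer representatives in $\{0,1,\ldots,m-1\}$, and set $\bar{\bf u}_i={\bf u}_i\bmod q$ and $\bar{\bf v}_i={\bf v}_i\bmod q$ in $\mathbb{Z}_q^n$ for $i\in[N_{s,t}]$.

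The key point to establish first is that reduction modulo $q$ preserves both the vanishing and the non-vanishing of all the relevant dot products. Here one uses that each of the three types satisfies $s=p$ or $t=p$: for any $i,j\in[N_{s,t}]$ at least one of ${\bf u}_i,{\bf v}_j$ has all its coordinates divisible by $p$, since $p$ divides the corresponding gcd with $m$. Hence the integer $\sum_{l=1}^n {\bf u}_i(l){\bf v}_j(l)$ is divisible by $p$, and so is the element $\langle{\bf u}_i,{\bf v}_j\rangle$ of $\mathbb{Z}_m$, i.e. $\langle{\bf u}_i,{\bf v}_j\rangle\in\{0,p,2p,\ldots,(q-1)p\}$. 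A short computation using $\gcd(p,q)=1$ shows that an element $kp$ of this set, with $0\le k<q$, is zero in $\mathbb{Z}_m$ exactly when $k=0$, which is also exactly when $kp\equiv 0\bmod q$; thus for all $i,j\in[N_{s,t}]$ one has $\langle{\bf u}_i,{\bf v}_j\rangle\equiv 0\bmod m$ if and only if $\langle\bar{\bf u}_i,\bar{\bf v}_j\rangle=0$ in $\mathbb{Z}_q$.

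Given this equivalence the proof finishes at once: reducing the two defining conditions of the matching family $\mathcal{U},\mathcal{V}$ modulo $q$ yields $\langle\bar{\bf u}_i,\bar{\bf v}_i\rangle=0$ for every $i$ and $\langle\bar{\bf u}_i,\bar{\bf v}_j\rangle\ne 0$ whenever $i\ne j$, so $\{\bar{\bf u}_1,\ldots,\bar{\bf u}_{N_{s,t}}\}$ and $\{\bar{\bf v}_1,\ldots,\bar{\bf v}_{N_{s,t}}\}$ form a matching family of size $N_{s,t}$ in $\mathbb{Z}_q^n$ (the $\bar{\bf u}_i$ being forced to be pairwise distinct, since $\bar{\bf u}_i=\bar{\bf u}_j$ with $i\ne j$ would make $\langle\bar{\bf u}_i,\bar{\bf v}_i\rangle$ and $\langle\bar{\bf u}_j,\bar{\bf v}_i\rangle$ equal), whence $N_{s,t}\le k(q,n)\le\kappa_q$. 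The only genuinely delicate step is the middle one: a nonzero element of $\mathbb{Z}_m$ can perfectly well vanish modulo $q$, so in general the non-vanishing conditions $\langle{\bf u}_i,{\bf v}_j\rangle\ne 0$ ($i\ne j$) need not survive reduction — it is exactly the fact that, for pairs of these three types, all the relevant dot products lie in $p\mathbb{Z}_m$, where vanishing mod $m$ and mod $q$ coincide, that rescues the argument. The companion Lemma \ref{lem:q_family} for the types $(1,q),(q,1),(q,q)$ is obtained identically, with the roles of $p$ and $q$ interchanged.
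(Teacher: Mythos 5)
Your proof is correct and follows essentially the same route as the paper: restrict to the pairs of the given type, pass to $\mathbb{Z}_q^n$, observe that the matching conditions survive the reduction, and invoke the bound $k(q,n)\le\kappa_q$ of Dvir \emph{et al.} The only cosmetic difference is that the paper first divides the $p$-divisible vector by $p$ before reducing modulo $q$, which is just multiplication by the unit $p\in\mathbb{Z}_q^*$ and hence equivalent to your direct reduction argument via the observation that the relevant dot products all lie in $p\mathbb{Z}_m$.
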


\begin{proof}
We prove for $(s,t)=(1,p)$. The other cases can be treated similarly.
Without loss of generality, we can suppose that $\{({\bf u}_1, {\bf v}_1), \ldots,
({\bf u}_c, {\bf v}_c)\}$ are the pairs of type $(s,t)$, where $c=N_{1,p}$.
Let
\begin{itemize}
\item
${\bf u}_i^\prime=({\bf u}_i(1) \bmod q, \ldots, {\bf u}_i(n)\bmod q)$ and
\item
${\bf v}_i^\prime=({\bf v}_i(1)/p \bmod q, \ldots, {\bf v}_i(n)/p \bmod q)$
\end{itemize}
for every $i\in [c]$.
Then  $\mathcal{U}^\prime=\{{\bf u}_1^\prime, \ldots, {\bf u}_c^\prime\}$ and
$\mathcal{V}^\prime=\{{\bf v}_1^\prime, \ldots,
{\bf v}_c^\prime\}$ form  a matching family of size $c$ in
$\mathbb{Z}_q^n$. This implies that $N_{s,t}=c\leq \kappa_q$, using results of Dvir {\it et al.} in \cite{DGY10}.
\end{proof}

Similarly, we have

\begin{lemma}\label{lem:q_family}
 If   $(s,t)\in \{(1,q),(q,1),  (q,q)\}$,
then $N_{s,t}\leq \kappa_p$.
\end{lemma}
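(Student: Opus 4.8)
The final statement to prove is Lemma~\ref{lem:q_family}, which is the mirror image of Lemma~\ref{lem:p_family} with the roles of $p$ and $q$ interchanged. The plan is therefore to follow exactly the argument of Lemma~\ref{lem:p_family}, swapping every occurrence of $p$ with $q$ and vice versa, and swapping $\kappa_p$ with $\kappa_q$.

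Concretely, I would prove the case $(s,t)=(1,q)$ in detail and remark that $(q,1)$ and $(q,q)$ are handled in the same way. Suppose the pairs of type $(s,t)$ are $\{({\bf u}_1,{\bf v}_1),\ldots,({\bf u}_c,{\bf v}_c)\}$ with $c=N_{1,q}$. The first step is to reduce modulo $p$: set
\[
{\bf u}_i^\prime=({\bf u}_i(1)\bmod p,\ldots,{\bf u}_i(n)\bmod p)
\quad\text{and}\quad
{\bf v}_i^\prime=({\bf v}_i(1)/q\bmod p,\ldots,{\bf v}_i(n)/q\bmod p)
\]
for every $i\in[c]$. Since $({\bf u}_i,{\bf v}_i)$ has type $(1,q)$, the vector ${\bf v}_i$ is divisible by $q$ but $\gcd({\bf v}_i(1)/q,\ldots,{\bf v}_i(n)/q,m)$ is coprime to $q$, and ${\bf u}_i$ is not identically $0\bmod p$; this is what guarantees that ${\bf u}_i^\prime,{\bf v}_i^\prime$ are genuine (nonzero) vectors in $\mathbb{Z}_p^n$, and that division by $q$ makes sense since $q$ is a unit mod $p$.

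The second step is to check the matching conditions mod $p$. For the diagonal, $\langle{\bf u}_i,{\bf v}_i\rangle\equiv0\bmod m$ forces $\langle{\bf u}_i,{\bf v}_i\rangle\equiv0\bmod p$, and since ${\bf v}_i=q\,{\bf v}_i^\prime\bmod p$ up to the scaling by the unit $q$, we get $\langle{\bf u}_i^\prime,{\bf v}_i^\prime\rangle\equiv0\bmod p$. For $i\neq j$, $\langle{\bf u}_i,{\bf v}_j\rangle\not\equiv0\bmod m$ does not immediately give non-vanishing mod $p$, so the cleaner route (as in Lemma~\ref{lem:bound_aibj}) is: all the ${\bf u}_i$ of a fixed type $(1,q)$ are $0\bmod q$ in the second-coordinate sense — wait, no; here instead one observes $\langle{\bf u}_i,{\bf v}_j\rangle\equiv0\bmod q$ automatically because $q\mid{\bf v}_j$, so the hypothesis $\langle{\bf u}_i,{\bf v}_j\rangle\not\equiv0\bmod m$ with $m=pq$ forces $\langle{\bf u}_i,{\bf v}_j\rangle\not\equiv0\bmod p$, hence $\langle{\bf u}_i^\prime,{\bf v}_j^\prime\rangle\not\equiv0\bmod p$. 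Thus $\mathcal{U}^\prime=\{{\bf u}_1^\prime,\ldots,{\bf u}_c^\prime\}$ and $\mathcal{V}^\prime=\{{\bf v}_1^\prime,\ldots,{\bf v}_c^\prime\}$ form a matching family of size $c$ in $\mathbb{Z}_p^n$, so $c\leq\kappa_p$ by the bound $k(p,n)\leq\kappa_p$ of Dvir {\it et al.}~\cite{DGY10}. Finally, the cases $(q,1)$ (reduce mod $p$, dividing ${\bf u}_i$ by $q$) and $(q,q)$ (divide both by $q$) go through by the same three-step recipe.

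There is essentially no obstacle here: the only thing to be careful about is the bookkeeping in the "for $i\neq j$" step — making sure one uses the automatic vanishing mod $q$ (coming from $q\mid{\bf v}_j$, or $q\mid{\bf u}_i$, or both) to deduce the needed non-vanishing mod $p$ from the global non-vanishing mod $m$. Since this is word-for-word the argument of Lemma~\ref{lem:p_family} with $p\leftrightarrow q$, I would simply state the proof for $(1,q)$ concisely and write ``the cases $(q,1)$ and $(q,q)$ are treated similarly,'' exactly paralleling the level of detail used for Lemma~\ref{lem:p_family}.
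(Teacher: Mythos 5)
Your proposal is correct and coincides with the paper's treatment: the paper proves Lemma~\ref{lem:q_family} by simply remarking that it follows ``similarly'' from the argument for Lemma~\ref{lem:p_family}, which is exactly the $p\leftrightarrow q$ mirroring you carry out (reduce mod $p$, divide the $q$-divisible vectors by the unit $q$, and use the automatic vanishing mod $q$ to extract non-vanishing mod $p$ from non-vanishing mod $m$). Your filled-in verification of the off-diagonal condition is the right and only subtle point, and it is handled correctly.
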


Finally, we have the main result of this paper.

\begin{theorem}\label{thm:main}
Let $n$ be a constant and let $m=pq$ for two distinct primes $p$ and $q$.
Then we have that
$k(m,n)\leq O(m^{0.625n+0.125})$
when $p\rightarrow \infty$ and $p/q\rightarrow 1$.
\end{theorem}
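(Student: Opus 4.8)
The plan is to combine all the partial bounds established in Sections \ref{subsec:bound_pq}--\ref{sec:pg} into a single estimate. First I would invoke the type decomposition from Definition \ref{def:type}: writing $k=k(m,n)=\sum_{s,t\in\{1,p,q,m\}} N_{s,t}$, I apply Lemma \ref{lem:stdm} to the nine classes with $m\mid st$ (contributing at most $9$ in total), Lemma \ref{lem:p_family} to the three classes $(1,p),(p,1),(p,p)$ (each at most $\kappa_q$), and Lemma \ref{lem:q_family} to the three classes $(1,q),(q,1),(q,q)$ (each at most $\kappa_p$). This leaves only the single class of type $(1,1)$, so that
\begin{equation*}
k(m,n)\leq 9+3\kappa_p+3\kappa_q+N_{1,1}.
\end{equation*}
Since $\kappa_p=\lfloor 4p^{0.5n}+2\rfloor$ and $\kappa_q=\lfloor 4q^{0.5n}+2\rfloor$ are $O(m^{0.5n})$ when $p/q\to1$, these terms are negligible compared to the target $m^{0.625n+0.125}$, so everything reduces to bounding $N_{1,1}$.

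Next I would bound $N_{1,1}$ using the projective-space machinery. Let $\{({\bf u}_i,{\bf v}_i):i\in[N_{1,1}]\}$ be the pairs of type $(1,1)$; then by definition the vectors ${\bf u}_i$ and ${\bf v}_i$ lie in $\mathbb{S}_{n,m}$, hence define points and hyperplanes in $\mathbb{P}_{n,m}=\mathbb{H}_{n,m}$. Setting $\mathcal{U}=\{{\bf u}_1,\dots,{\bf u}_{N_{1,1}}\}$ and $\mathcal{V}=\{{\bf v}_1,\dots,{\bf v}_{N_{1,1}}\}$, these form a matching family inside $\mathbb{P}_{n,m}$ in the sense of Lemma \ref{lem:unpm}. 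Now I apply Theorem \ref{thm:upper_bound_m_infty} directly to this matching family: it gives $N_{1,1}=|\mathcal{U}|\leq(8+\epsilon)m^{0.625n+0.125}$ for any $\epsilon>0$ as $p\to\infty$ and $p/q\to1$ with $n$ constant. Combining with the inequality from the previous paragraph yields $k(m,n)\leq 9+3\kappa_p+3\kappa_q+(8+\epsilon)m^{0.625n+0.125}$, and since the first three terms are $O(m^{0.5n})=o(m^{0.625n+0.125})$, the whole expression is $O(m^{0.625n+0.125})$, which is exactly the claim.

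The conceptual content has already been discharged by the earlier lemmas, so the only thing to be careful about here is bookkeeping: I must check that the type-$(1,1)$ pairs genuinely produce a matching family of points and hyperplanes (i.e.\ that $\gcd$ condition $s=t=1$ is precisely what places ${\bf u}_i,{\bf v}_i$ in $\mathbb{S}_{n,m}$, so that the equivalence-class representatives are well-defined and no two ${\bf u}_i$ collapse), and that the orthogonality/non-orthogonality relations mod $m$ translate verbatim into the adjacency/non-adjacency relations in ${\bf G}_{n,m}$. The main obstacle, such as it is, is simply ensuring the asymptotic regime is consistent: Theorem \ref{thm:upper_bound_m_infty} already assumes $n$ constant, $p\to\infty$, $p/q\to1$, which matches the hypotheses of the present theorem exactly, so no new estimates on $\lambda_1,\dots,\lambda_4$, $\ell$, or $\Delta$ are needed at this stage. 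Thus the proof is essentially a one-line assembly of Lemmas \ref{lem:stdm}, \ref{lem:p_family}, \ref{lem:q_family} and Theorem \ref{thm:upper_bound_m_infty}.
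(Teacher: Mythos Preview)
Your proposal is correct and follows essentially the same route as the paper: decompose $k(m,n)=\sum_{s,t}N_{s,t}$, bound the fifteen non-$(1,1)$ classes via Lemmas \ref{lem:stdm}, \ref{lem:p_family}, \ref{lem:q_family}, and invoke Theorem \ref{thm:upper_bound_m_infty} for $N_{1,1}$. Your extra care in justifying that the type-$(1,1)$ pairs yield a genuine matching family in $\mathbb{P}_{n,m}$ is a nice touch that the paper leaves implicit.
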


\begin{proof}
Theorem  \ref{thm:upper_bound_m_infty} gives an upper bound for $N_{1,1}$.
By Theorem  \ref{thm:upper_bound_m_infty}, Lemmas \ref{lem:stdm}, \ref{lem:p_family} and  \ref{lem:q_family},
$k(m,n)=k=\sum_{s|m, t|m}N_{s,t}\leq 9+ 3\kappa_p+3\kappa_q+O\big(m^{0.625n+0.125}\big)$,
which is asymptotically bounded by $O\big(m^{0.625n+0.125}\big)$ when $p\rightarrow \infty$ and $p/q
\rightarrow 1$.
\end{proof}

\section{Concluding Remarks}\label{sec:con}

It would be attractive if the method in this paper could be extended to work for any integer $m$.
To this end, we would need to show that the projective graph
${\bf G}_{n,m}$, for a general integer $m$, has some kind of expanding property.
In fact, a generalized version of the tensor lemma (see Lemma \ref{lemma:tensor-general}) does exist for the
matrix $B_{n,m}$, and it is also possible to determine the
eigenvalues of $B_{n,m}$ for a general integer $m$ (see Theorems
\ref{theorem:eigenvalues-power} and \ref{theorem:eigenvalues-general}).

\begin{lemma}\label{lemma:tensor-general}
{\em (Zhang \cite{Z13}: Tensor Lemma)}
Let $m=m_1\cdots m_r=p_1^{e_1}\cdots p_r^{e_r}$ for distinct primes $p_1,\ldots, p_r$ and
positive integers
$e_1,\ldots, e_r$,
where $m_s=p_s^{e_s}$ for every $s\in [r]$.
Then we have that
\begin{equation}
B_{n,m}\simeq B_{n,m_1} \otimes \cdots \otimes B_{n,m_r}.
\end{equation}
\end{lemma}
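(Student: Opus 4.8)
The plan is to follow the proof of Lemma~\ref{pro:tensorp} essentially verbatim, replacing the two-modulus Chinese Remainder Theorem (CRT) by its $r$-fold version; the primality of the $p_s$ plays no role, only the pairwise coprimality of the $m_s$. First I would set up the CRT bijection. Define $\pi:\mathbb{P}_{n,m_1}\times\cdots\times\mathbb{P}_{n,m_r}\to\mathbb{P}_{n,m}$ by $\pi({\bf u}_1,\ldots,{\bf u}_r)=[{\bf w}]$, where ${\bf w}(i)\equiv{\bf u}_s(i)\bmod m_s$ for every $s\in[r]$ and $i\in[n]$, the ${\bf u}_s$ being arbitrary representatives in $\mathbb{S}_{n,m_s}$ of the given classes.

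Then I would run the same three verifications as in Lemma~\ref{pro:tensorp}. (i) \emph{$\pi$ is well-defined.} First, ${\bf w}\in\mathbb{S}_{n,m}$: since each ${\bf u}_s\in\mathbb{S}_{n,m_s}$ we have $\gcd({\bf w}(1),\ldots,{\bf w}(n),m_s)=1$ for every $s$, and the $m_s$ are pairwise coprime, so $\gcd({\bf w}(1),\ldots,{\bf w}(n),m)=1$. Second, if ${\bf u}_s\sim{\bf u}_s'$ via a unit $\lambda_s\in\mathbb{Z}_{m_s}^*$ for each $s$, pick $\delta\in\mathbb{Z}_m$ with $\delta\equiv\lambda_s\bmod m_s$ for all $s$; then $\delta$ is a unit modulo each $m_s$, hence modulo $m$, and ${\bf w}'\equiv\delta{\bf w}\bmod m$, so $[{\bf w}']=[{\bf w}]$. (ii) \emph{$\pi$ is a bijection.} Injectivity is immediate from CRT, and the product formula~(\ref{equation:order}) gives $\theta_{n,m}=\prod_{s\in[r]}\theta_{n,m_s}$, which equals the size of the domain, so $\pi$ is also onto. (iii) \emph{$\pi$ tensorizes adjacency.} For ${\bf w}=\pi({\bf u}_1,\ldots,{\bf u}_r)$ and ${\bf w}'=\pi({\bf u}_1',\ldots,{\bf u}_r')$, CRT gives $\langle{\bf w},{\bf w}'\rangle\equiv 0\bmod m$ if and only if $\langle{\bf u}_s,{\bf u}_s'\rangle\equiv 0\bmod m_s$ for every $s$; equivalently, the $({\bf w},{\bf w}')$-entry of $A_{n,m}$ is the product of the $({\bf u}_s,{\bf u}_s')$-entries of the $A_{n,m_s}$.

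Fixing an ordering of each $\mathbb{P}_{n,m_s}$ and ordering $\mathbb{P}_{n,m}$ through $\pi$ exactly as done for $r=2$ after Lemma~\ref{pro:tensorp}, item (iii) reads $A_{n,m}=A_{n,m_1}\otimes\cdots\otimes A_{n,m_r}$ (so in particular $\simeq$). Applying the mixed-product rule $(A\otimes B)(C\otimes D)=(AC)\otimes(BD)$ inductively on the number of factors then gives
\[
B_{n,m}=A_{n,m}A_{n,m}^t\simeq\Bigl(\bigotimes_{s\in[r]}A_{n,m_s}\Bigr)\Bigl(\bigotimes_{s\in[r]}A_{n,m_s}^t\Bigr)=\bigotimes_{s\in[r]}B_{n,m_s},
\]
as claimed. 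Alternatively one may induct on $r$: for $r\ge 2$ the two coprime factors $m_1$ and $m/m_1$ satisfy the hypotheses of the coprime-integer version of Lemma~\ref{pro:tensorp}, giving $B_{n,m}\simeq B_{n,m_1}\otimes B_{n,m/m_1}$, and the inductive hypothesis applied to $m/m_1=m_2\cdots m_r$ finishes the argument, since $\otimes$ respects $\simeq$.

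I do not expect a genuine obstacle: every step is the evident $r$-fold analogue of the $r=2$ case already established. The only points needing a little care are bookkeeping ones — spelling out the ordering of $\mathbb{P}_{n,m}$ so that the permutation-equivalence $\simeq$ (not merely equality of spectra) is justified, observing that an $r$-fold tensor product of matrices is unambiguous up to $\simeq$ irrespective of the parenthesization, and checking the mixed-product identity for $r$ factors (a one-line induction). All of the real content is, as in Lemma~\ref{pro:tensorp}, packaged into the multiplicativity $\theta_{n,m}=\prod_s\theta_{n,m_s}$ coming from~(\ref{equation:order}) together with the Chinese Remainder Theorem.
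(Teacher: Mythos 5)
Your proof is correct. The paper gives no proof of this lemma — it is quoted from Zhang \cite{Z13} in the concluding remarks — but your argument is precisely the natural $r$-fold Chinese-Remainder generalization of the paper's own proof of Lemma~\ref{pro:tensorp} (the case $r=2$, $e_1=e_2=1$), and every step you list (multiplicativity of $\theta_{n,\cdot}$ over coprime factors via~(\ref{equation:order}), lifting units through the CRT to get well-definedness, tensorization of adjacency, and the mixed-product rule) carries over verbatim to prime-power moduli, since only the pairwise coprimality of the $m_s$ is ever used.
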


\begin{theorem}\label{theorem:eigenvalues-power}
{\em (Zhang \cite{Z13}: Eigenvalues of $B_{n,m}$ when $m$ is a prime power)}
Let $m=p^e$, where $p$ is a prime  and $e$ is a positive integer, and let $n$ be a
positive integer. Then
$\lambda_1 = p^{2(e-1)(n-2)}\cdot \theta_{n-1,p}^2 $ is an eigenvalue of
$B_{n,m}$ of multiplicity $d_1=1$,
$\lambda_2 =   p^{(2e-1)(n-2)}$ is an eigenvalue of $B_{n,m}$ of multiplicity
$d_2 =  \theta_{n,p}-1$, and $\lambda_s=  p^{(2e+1-s)(n-2)}$
is an eigenvalue of $B_{n,m}$ of multiplicity $d_s=(p^{n-1}-1)\theta_{n,p^{s-2}}$
for every $s\in \{3,\ldots,e+1\}$.
\end{theorem}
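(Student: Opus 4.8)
The plan is to diagonalise $B_{n,m}$ directly for $m=p^{e}$, since the tensor lemma (Lemma~\ref{lemma:tensor-general}) is vacuous in this case. Write $\pi_{\ell}\colon\mathbb{P}_{n,p^{e}}\to\mathbb{P}_{n,p^{\ell}}$ for reduction mod $p^{\ell}$ on points (well defined since a primitive vector stays primitive and a unit stays a unit), and say an ordered pair $(\mathbf{u},\mathbf{u}')$ of points has \emph{type} $j\in\{0,1,\dots,e\}$ if $j$ is the largest index with $\pi_{j}(\mathbf{u})=\pi_{j}(\mathbf{u}')$ (type $e$ meaning $\mathbf{u}=\mathbf{u}'$, type $0$ the generic case). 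The first step is to show that the $(\mathbf{u},\mathbf{u}')$-entry of $B_{n,m}$ --- the number of hyperplanes incident to both points --- depends only on the type. Using Smith normal form of the $2\times n$ matrix with rows $\mathbf{u},\mathbf{u}'$, together with a unit rescaling of $\mathbf{u}'$, one brings a type-$j$ pair to the normal form $\mathbf{u}=\mathbf{e}_{1}$, $\mathbf{u}'=\mathbf{e}_{1}+p^{j}\mathbf{e}_{2}$; hence $\mathrm{GL}_{n}(\mathbb{Z}_{p^{e}})$ is transitive on type-$j$ pairs, and since $B_{n,m}$ is $\mathrm{GL}_{n}(\mathbb{Z}_{p^{e}})$-equivariant its entries are constant on each type. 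For this normal form the conditions $\langle\mathbf{v},\mathbf{u}\rangle\equiv\langle\mathbf{v},\mathbf{u}'\rangle\equiv 0$ read $\mathbf{v}(1)\equiv 0$ and $\mathbf{v}(2)\equiv 0\pmod{p^{e-j}}$, and counting the primitive $\mathbf{v}$ satisfying these, up to scaling, gives the common entry $c_{j}=p^{j}\,\theta_{n-2,p^{e}}$ for $0\le j\le e-1$ and $c_{e}=\theta_{n-1,p^{e}}$. Thus $B_{n,m}=\sum_{j=0}^{e}c_{j}R_{j}$ lies in the commutative $(e+1)$-dimensional Bose--Mesner algebra of the type relation (a symmetric association scheme on $\mathbb{P}_{n,p^{e}}$), so its minimal polynomial has degree at most $e+1$ and it has at most $e+1$ distinct eigenvalues.

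Next I would identify the eigenspaces via the filtration by level of definition. Put $V_{\ell}=\pi_{\ell}^{*}\mathbb{R}^{\theta_{n,p^{\ell}}}\subseteq\mathbb{R}^{\theta_{n,p^{e}}}$, the functions pulled back along $\pi_{\ell}$, so $V_{0}\subset V_{1}\subset\cdots\subset V_{e}=\mathbb{R}^{\theta_{n,p^{e}}}$ with $\dim V_{\ell}=\theta_{n,p^{\ell}}$. Each $V_{\ell}$ is $B_{n,m}$-invariant: for $f=\pi_{\ell}^{*}g$ we have $(B_{n,m}f)(\mathbf{u})=\sum_{\bar{\mathbf{w}}}g(\bar{\mathbf{w}})\big(\sum_{\pi_{\ell}(\mathbf{u}')=\bar{\mathbf{w}}}(B_{n,m})_{\mathbf{u}\mathbf{u}'}\big)$, and the inner sum depends only on $\pi_{\ell}(\mathbf{u})$ and $\bar{\mathbf{w}}$ by the equivariance above, so $B_{n,m}f$ again factors through $\pi_{\ell}$. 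As $B_{n,m}$ is symmetric it also preserves each $V_{\ell-1}^{\perp}$, hence it preserves $W_{s}:=V_{s-1}\cap V_{s-2}^{\perp}$ for $s\in\{1,\dots,e+1\}$ (with $V_{-1}:=\{0\}$), and $\mathbb{R}^{\theta_{n,p^{e}}}=\bigoplus_{s=1}^{e+1}W_{s}$ is an orthogonal decomposition into nonzero $B_{n,m}$-invariant subspaces. Since $B_{n,m}$ has at most $e+1$ distinct eigenvalues and its eigenspaces are orthogonal and span the whole space, each $W_{s}$ must coincide with exactly one eigenspace. Counting dimensions with $\theta_{n,p^{\ell}}=p^{(\ell-1)(n-1)}\theta_{n,p}$ gives $\dim W_{1}=1$, $\dim W_{2}=\theta_{n,p}-1$, and $\dim W_{s}=\theta_{n,p^{s-1}}-\theta_{n,p^{s-2}}=(p^{n-1}-1)\theta_{n,p^{s-2}}$ for $s\ge 3$ --- exactly the claimed multiplicities $d_{s}$. (Alternatively one can write down explicit eigenvectors in the spirit of Section~\ref{sec:pg}: for $s\ge 3$, take for each point of $\mathbb{P}_{n,p^{s-2}}$ the $p^{n-1}-1$ ``difference'' vectors supported on the fibre above it of the reduction $\mathbb{P}_{n,p^{s-1}}\to\mathbb{P}_{n,p^{s-2}}$ --- the analogue of the columns of $R_{d}$ --- and pull them back along $\pi_{s-1}$.)

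Finally I would compute the eigenvalue $\lambda_{s}$ on $W_{s}$. On $W_{1}$ (constants) $B_{n,m}$ acts by $\theta_{n-1,p^{e}}^{2}$, which equals $p^{2(e-1)(n-2)}\theta_{n-1,p}^{2}=\lambda_{1}$. For $s\ge 2$, apply $B_{n,m}$ to the vector $f\in W_{s}$ obtained by pulling back along $\pi_{s-1}$ the function equal to $1$ at a point $\bar{\mathbf{a}}$, equal to $-1$ at a point $\bar{\mathbf{b}}$, and $0$ elsewhere, where $\bar{\mathbf{a}},\bar{\mathbf{b}}$ are distinct points of $\mathbb{P}_{n,p^{s-1}}$ lying in a common fibre of $\mathbb{P}_{n,p^{s-1}}\to\mathbb{P}_{n,p^{s-2}}$, and evaluate at a point $\mathbf{u}_{0}$ with $\pi_{s-1}(\mathbf{u}_{0})=\bar{\mathbf{a}}$. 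Every $\mathbf{u}'$ with $\pi_{s-1}(\mathbf{u}')=\bar{\mathbf{b}}$ has type exactly $s-2$ relative to $\mathbf{u}_{0}$, whereas the $\mathbf{u}'$ with $\pi_{s-1}(\mathbf{u}')=\bar{\mathbf{a}}$ have types $j\ge s-1$, so collecting terms yields $\lambda_{s}=\sum_{j=s-1}^{e}c_{j}v'_{j}-c_{s-2}\,p^{(e-s+1)(n-1)}$, where $v'_{j}$ is the number of points of type $j$ relative to $\mathbf{u}_{0}$ within its $\pi_{s-1}$-fibre (a count independent of $\mathbf{u}_{0}$ by homogeneity). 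Substituting the closed forms for $c_{j}$ and $v'_{j}$ and simplifying the resulting geometric sums gives $\lambda_{s}=p^{(2e+1-s)(n-2)}$; the case $e=1$ recovers Alon's eigenvalue $p^{n-2}$ \cite{Alo86}, and small cases such as $e=2$ can be confirmed by direct computation (e.g.\ $\lambda_{e+1}=\theta_{n-1,p^{e}}-p^{e-1}\theta_{n-2,p^{e}}=p^{e(n-2)}=m^{n-2}$).

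I expect the main obstacle to be purely computational: establishing the closed forms $c_{j}=p^{j}\theta_{n-2,p^{e}}$ and the fibre valencies $v'_{j}$ (with care over the projective scaling when reducing to the normal form $\mathbf{u}=\mathbf{e}_{1},\ \mathbf{u}'=\mathbf{e}_{1}+p^{j}\mathbf{e}_{2}$), and then collapsing the alternating combination $\sum_{j\ge s-1}c_{j}v'_{j}-c_{s-2}p^{(e-s+1)(n-1)}$ down to the clean exponent $(2e+1-s)(n-2)$. By contrast, the structural claims --- that the entries of $B_{n,m}$ depend only on the type, that the filtration $V_{0}\subset\cdots\subset V_{e}$ is $B_{n,m}$-invariant, and hence that each $W_{s}$ is a single eigenspace --- are conceptually routine once the type scheme is in place, and the multiplicity formulas follow immediately from $\dim V_{\ell}=\theta_{n,p^{\ell}}$.
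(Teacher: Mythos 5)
This theorem is stated in the paper without proof --- it is imported verbatim from Zhang \cite{Z13} as motivation in the concluding remarks --- so there is no in-paper argument to compare against; the closest analogue is the explicit-eigenvector computation of Section~\ref{sec:pg} for $m=pq$, which, as you correctly note, does not apply here because the tensor lemma is vacuous for a prime power. Your strategy (orbit/type analysis giving a commutative $(e+1)$-class association scheme containing $B_{n,m}$, the filtration $V_0\subset\cdots\subset V_e$ by level of definition, and the evaluation of $B_{n,m}$ on difference vectors) is sound, and your numerical anchors check out: $c_j=p^j\theta_{n-2,p^e}$ is the correct common-neighbour count for the normal form $(\mathbf{e}_1,\mathbf{e}_1+p^j\mathbf{e}_2)$, the dimension count $\dim W_s=\theta_{n,p^{s-1}}-\theta_{n,p^{s-2}}=(p^{n-1}-1)\theta_{n,p^{s-2}}$ reproduces $d_s$, and I verified your eigenvalue formula directly for $s=e+1$ (giving $m^{n-2}$) and for $s=2$, $e=2$ (giving $p^{3(n-2)}$).

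The one step that is not valid as written is the inference ``$B_{n,m}$ has at most $e+1$ distinct eigenvalues and there are $e+1$ orthogonal invariant subspaces $W_s$, hence each $W_s$ is exactly one eigenspace.'' That count does not preclude, say, $W_3$ splitting into two $B$-eigenspaces while $W_1$ and $W_2$ share an eigenvalue; and if some $W_s$ splits, both the multiplicity claim and your eigenvalue computation (which evaluates $B f$ at a single coordinate $\mathbf{u}_0$ and so presupposes $B|_{W_s}$ is scalar) collapse. The fix is available from machinery you already set up: each $R_j$, not just $B$, preserves every $V_\ell$ (the kernel of $\mathrm{GL}_n(\mathbb{Z}_{p^e})\to\mathrm{GL}_n(\mathbb{Z}_{p^\ell})$ acts transitively on each $\pi_\ell$-fibre, so fibre-sums of entries descend), hence each $V_\ell$ is a module over the full Bose--Mesner algebra and is therefore a direct sum of its $e+1$ common eigenspaces $E_1,\dots,E_{e+1}$. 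A strictly increasing chain $V_0\subset V_1\subset\cdots\subset V_e$ of $e+1$ such sums forces each successive step to adjoin exactly one new $E_t$, so $W_s=V_{s-1}\ominus V_{s-2}$ \emph{is} a single common eigenspace and $B$ acts on it as a scalar. With that repaired, your single-coordinate evaluation legitimately extracts $\lambda_s$, and the remaining work is exactly the ``purely computational'' simplification you flag (collapsing $\sum_{j\ge s-1}c_jv_j'-c_{s-2}p^{(e-s+1)(n-1)}$ to $p^{(2e+1-s)(n-2)}$), which does go through.
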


\begin{theorem}\label{theorem:eigenvalues-general}
{\em (Zhang \cite{Z13}: Eigenvalues of $B_{n,m}$ when $m$ is any positive integer)}
Let $m=m_1\cdots m_r=p_1^{e_1}\cdots p_r^{e_r}$ for distinct primes $p_1,\ldots, p_r$ and
positive integers
$e_1,\ldots, e_r$,
where $m_s=p_s^{e_s}$ for every $s\in [r]$.
Let $\lambda_{s}$ be an eigenvalue of $B_{n,m_s}$ of
multiplicity $d_{s}$ for every
$s\in[r]$. Then $\lambda_1\cdots\lambda_r$
is an  eigenvalue of  $B_{n,m}$ of multiplicity
$d_{1}\cdots d_{r}$.
\end{theorem}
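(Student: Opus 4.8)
The plan is to deduce the statement entirely from the behaviour of eigenvalues under the Kronecker product. By the generalized tensor lemma (Lemma~\ref{lemma:tensor-general}) we have $B_{n,m}\simeq B_{n,m_1}\otimes\cdots\otimes B_{n,m_r}$, and since $\simeq$ preserves the characteristic polynomial it is enough to prove the eigenvalue/multiplicity claim for the single matrix $M:=B_{n,m_1}\otimes\cdots\otimes B_{n,m_r}$; its order is $\prod_{s}\theta_{n,m_s}=\theta_{n,m}$ by the multiplicativity visible in~(\ref{equation:order}), which is a convenient consistency check.

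The first substantive step is the standard fact about Kronecker products: if $\phi_s$ is an eigenvector of $B_{n,m_s}$ with eigenvalue $\mu_s$ for each $s\in[r]$, then $\phi_1\otimes\cdots\otimes\phi_r$ is an eigenvector of $M$ with eigenvalue $\mu_1\cdots\mu_r$. Taking $\mu_s=\lambda_s$ and letting $\phi_s$ run over a basis of the $\lambda_s$-eigenspace of $B_{n,m_s}$, which has dimension $d_s$, the $d_1\cdots d_r$ resulting tensors all lie in the $(\lambda_1\cdots\lambda_r)$-eigenspace of $M$ and are linearly independent, because a tensor product of linearly independent families is linearly independent. Hence $\lambda_1\cdots\lambda_r$ is an eigenvalue of $M$, and therefore of $B_{n,m}$, of multiplicity at least $d_1\cdots d_r$.

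For the reverse inequality I would use that each $B_{n,m_s}=A_{n,m_s}A_{n,m_s}^t$ is real symmetric and hence admits an orthonormal eigenbasis; tensoring these bases yields a full eigenbasis of $M$, from which one reads off that the multiplicity of an arbitrary real number $c$ as an eigenvalue of $M$ equals the sum, over all tuples $(\mu_1,\dots,\mu_r)$ of eigenvalues of $B_{n,m_1},\dots,B_{n,m_r}$ with $\mu_1\cdots\mu_r=c$, of the products of the corresponding multiplicities. Thus the multiplicity of $\lambda_1\cdots\lambda_r$ equals $d_1\cdots d_r$ precisely when $(\lambda_1,\dots,\lambda_r)$ is the only such tuple, and I expect establishing this absence of collisions to be the main obstacle. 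It should follow from the explicit spectrum in Theorem~\ref{theorem:eigenvalues-power}: every eigenvalue of $B_{n,m_s}$ has the form $p_s^{a}\,\theta_{n-1,p_s}^{2\beta}$ with $\beta\in\{0,1\}$ and $a$ ranging over an explicit finite set of exponents, so, since $p_1,\dots,p_r$ are distinct primes and each $\theta_{n-1,p_s}$ is coprime to $p_s$, comparing $p_s$-adic valuations of two equal products of eigenvalues forces the exponents $a$, and then the factors $\theta_{n-1,p_s}^{2\beta}$, to agree factor by factor, so the two tuples coincide. If one only needs the inequality that the multiplicity is at least $d_1\cdots d_r$ — which is all that the spectral arguments elsewhere in the paper actually invoke — the first two paragraphs already constitute a complete proof.
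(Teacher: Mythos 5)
The paper itself offers no proof of this theorem: it is quoted in the concluding remarks from Zhang \cite{Z13}, and the only case actually proved in the paper is $r=2$, $e_1=e_2=1$ (Lemmas \ref{pro:tensorp} and \ref{lemma:eigenvalue}), where the argument is exactly your first two paragraphs — explicit tensor eigenvectors of $B_{n,p}\otimes B_{n,q}$. Your construction of $d_1\cdots d_r$ linearly independent eigenvectors of $B_{n,m_1}\otimes\cdots\otimes B_{n,m_r}$ with eigenvalue $\lambda_1\cdots\lambda_r$ is correct and complete, and it establishes the lower bound on the multiplicity, which, as you note, is all that spectral arguments of this type (and this paper) ever use.

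Your third paragraph, however, is wrong, not merely incomplete. The valuation argument already has a gap: $\theta_{n-1,p_t}$ need not be coprime to $p_s$ for $t\neq s$ (e.g.\ $\theta_{2,2}=3$), so the $p_s$-adic valuation of a product of eigenvalues does not isolate the exponent attached to $p_s$. Worse, collisions genuinely occur, so no repair is possible. Take $n=3$, $m_1=3^2$, $m_2=2^5$. By Theorem \ref{theorem:eigenvalues-power}, $B_{3,9}$ has eigenvalues $144=3^{2}\theta_{2,3}^{2}$, $27$, $9$, and $B_{3,32}$ has eigenvalues $2304=2^{8}\theta_{2,2}^{2}$, $512$, $256$, $128$, $64$, $32$. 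Then
\begin{equation*}
144\cdot 32 \;=\; 9\cdot 512 \;=\; 4608,
\end{equation*}
so the multiplicity of $4608$ in $B_{3,\,288}$ is the \emph{sum} of two products of multiplicities and strictly exceeds each of them. Consequently the multiplicity of $\lambda_1\cdots\lambda_r$ can be strictly larger than $d_1\cdots d_r$, and both the theorem and your proof should be read as asserting only the inequality \emph{at least} $d_1\cdots d_r$; your hedged final sentence is the correct resolution.
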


However, the method used in this paper
may become weaker as the number of distinct prime factors of $m$ increases.
As in many other classic applications,
 the performance of our method  depends on
the difference between the two largest eigenvalues of $B_{n,m}$.
Roughly speaking, a larger difference corresponds to better  performance.
However, Theorems \ref{theorem:eigenvalues-power} and \ref{theorem:eigenvalues-general}
show that this difference becomes less significant as the number of distinct prime factors
of $m$ increases. On the other hand, this does not rule out the possibility of
applying Theorems \ref{theorem:eigenvalues-power} and \ref{theorem:eigenvalues-general}
in a different way. Currently, an exponential gap still exists between the
best lower bound and the best upper bound for $k(m,n)$.  We hope that
these general theorems (Theorems \ref{theorem:eigenvalues-power} and
 \ref{theorem:eigenvalues-general})
can be used in some way to close this gap in the future.

\section*{Acknowledgment}

The authors thank the anonymous referees for very detailed and helpful comments that
significantly improved the organization and exposition of the paper.

The authors also gratefully acknowledge the support for this work received from
the National Research Foundation of Singapore under Research Grant NRF-CRP2-2007-03.

\ifCLASSOPTIONcaptionsoff
  \newpage
\fi

% biography section
\begin{IEEEbiographynophoto}{Yeow Meng Chee}
received the B.Math. degree in computer science and combinatorics
and optimization and the M.Math. and Ph.D. degrees in computer science,
from the University of Waterloo, Waterloo, ON, Canada, in 1988, 1989,
and 1996, respectively.

Currently, he is an Associate Professor at the Division of Mathematical Sciences,
School of Physical and Mathematical Sciences, Nanyang Technological
University, Singapore. Prior to this, he was Program Director of Interactive Digital
Media R\&D in the Media Development Authority of Singapore, Postdoctoral
Fellow at the University of Waterloo and IBM's Zürich Research Laboratory,
General Manager of the Singapore Computer Emergency Response Team,
and Deputy Director of Strategic Programs at the Infocomm Development Authority,
Singapore. His research interest lies in the interplay between combinatorics
and computer science/engineering, particularly combinatorial design
theory, coding theory, extremal set systems, and electronic design automation.
\end{IEEEbiographynophoto}
\begin{IEEEbiographynophoto}{San Ling}
received the B.A. degree in mathematics from the University of Cambridge,
Cambridge, U.K., in 1985, and the Ph.D. degree in mathematics from
the University of California, Berkeley, in 1990. Since April 2005, he has been a
Professor with the Division of Mathematical Sciences, School of Physical and
Mathematical Sciences, Nanyang Technological University, Singapore. Prior to
that, he was with the Department of Mathematics, National University of Singapore.
His research fields include arithmetic of modular curves and application
of number theory to combinatorial designs, coding theory, cryptography, and
sequences.
\end{IEEEbiographynophoto}
\begin{IEEEbiographynophoto}{Huaxiong Wang}
obtained a Ph.D. in Mathematics from the University of Haifa,
Israel (1996) and a Ph.D. in Computer Science from the University of Wollongong,
Australia (2001). He is currently with Nanyang Technological University,
Singapore. His research interests include cryptography, information security,
coding theory, combinatorics and theoretical computer science. He is on the editorial
boards of Designs, Codes and Cryptography, Journal of Communications
and Journal of Communications and Networks and was the Program Co-Chair of
9th Australasian Conference on Information Security and Privacy (ACISP?4),
Sydney, Australia, July, 2004 and 4th International Conference on Cryptology
and Network Security (CANS05), Xiamen, Fujian, China, December, 2005. He
won the inaugural Prize of Research Contribution awarded by the Computer
Science Association of Australasia in 2004.
\end{IEEEbiographynophoto}
\begin{IEEEbiographynophoto}{Liang Feng Zhang}
received the B.S. degree in applied mathematics from the Tongji
University,
Shanghai, China, in 2004,
the M.S. degree in applied mathematics from the Shanghai Jiao Tong
University,
Shanghai, China, in 2007, and the Ph.D. degree in cryptography from
the Nanyang Technological University, Singapore, in 2012.
He is currently a postdoctoral fellow at the University of Calgary, Canada.
His research interests  include coding theory, cryptography, and discrete mathematics.
\end{IEEEbiographynophoto}

\end{document}